\documentclass[11pt]{article}
\usepackage[T1]{fontenc}
\usepackage{lmodern}
\usepackage{amsmath,amssymb,amsthm,mathtools}
\usepackage[margin=1.05in]{geometry}
\usepackage{xcolor}
\usepackage[protrusion=true,expansion=true]{microtype}
\usepackage[colorlinks=true,linkcolor=blue!45!black,citecolor=blue!45!black,urlcolor=blue!45!black]{hyperref}
\numberwithin{equation}{section}
\newtheorem{theorem}{Theorem}[section]
\newtheorem{lemma}[theorem]{Lemma}
\newtheorem{proposition}[theorem]{Proposition}
\newcommand{\Ent}{\operatorname{Ent}}
\newcommand{\E}{\mathbb E}
\newcommand{\PP}{\mathbb P}
\hypersetup{pdftitle={Optimal Mixing for Local Edge-Coloring Dynamics on Trees},pdfauthor={Haoran Wang}}
\title{Optimal Mixing for Local Edge-Coloring Dynamics on Trees}
\author{Haoran Wang\\\texttt{whr.hrwang@gmail.com}}
\date{}
\begin{document}
\maketitle
\begin{abstract}
We study single-edge Glauber dynamics and neighboring-edge heat-bath dynamics for uniformly sampling proper edge colorings of a tree. We obtain log-Sobolev estimates that imply $O_{\Delta,q}(n\log n)$ mixing when $q\ge\Delta+2$ for single-edge updates and $q\ge\Delta+1$ for neighboring-edge updates. The proof uses local color release with a doubly exponential failure tail and a fixed preparation that permits arbitrary interior recoloring paths while restoring the exterior. When one spare color is available, neighboring-edge updates provide a second-layer controller in place of the two-spare-color buffer. Tree-message concentration and component entropy factorization then turn this typical connectivity into a functional inequality for the original local chains.
\end{abstract}

\section{Introduction}
Proper edge colorings of a tree can be sampled exactly by proceeding from a root to the leaves. The behavior of local Markov chains is less straightforward. Recoloring one edge must respect two endpoint constraints, and a move that is legal after a simultaneous color exchange may not be realizable by a short sequence of local updates. The problem is especially natural when the number of colors is close to the maximum degree $\Delta$.

Delcourt, Heinrich, and Perarnau~\cite{DHP20} proved polynomial mixing for the single-edge Glauber chain when $q\ge\Delta+1$, its irreducibility threshold on trees. On paths this threshold can exhibit much slower mixing~\cite{DGJ06}. Carlson, Chen, Feng, and Vigoda~\cite{CCFV26} proved an $O_{\Delta,q}(n)$ relaxation time for single-edge Glauber dynamics when $q\ge\Delta+2$, and the same bound for neighboring-edge heat-bath dynamics when $q\ge\Delta+1$. For single-edge Glauber they also obtained an $O_{\Delta,q}(nD\log n)$ mixing bound on trees of diameter $D$, and an $O_{\Delta,q}(n\log^2 n)$ bound on complete regular trees. The optimal mixing time on arbitrary trees remained open.

Chen, Wang, Zhang, and Zhang~\cite{CWZZ25} proved that weak spatial mixing holds for edge colorings of every tree of maximum degree $\Delta$ precisely when $q\ge 2\Delta-1$. The threshold $q=\Delta+2$ for single-edge dynamics lies below this range when $\Delta\ge4$. Our proof instead controls local connectivity and posterior information under the free-boundary measure.

\subsection{Main results}
For a tree $T$, let $\Omega=\Omega_q(T)$ denote its proper $q$-edge colorings and let $\pi$ be the uniform law on $\Omega$. The neighboring-edge chain selects an unordered pair $\{e,f\}$ with $e=f$ or $e\cap f\ne\varnothing$ and uniformly resamples its colors conditional on the other edges. Single-edge Glauber dynamics selects an edge and proposes a uniformly random color, accepting the proposal when legal. Both chains are reversible with stationary distribution $\pi$.

For $g:\Omega\to\mathbb R$, write
\[
 \mathcal D_1(g)=\sum_{e}\pi[\operatorname{Var}_e(g)],\qquad
 \mathcal D_2(g)=\sum_{\{e,f\}:\,e=f\text{ or }e\cap f\ne\varnothing}\pi[\operatorname{Var}_{\{e,f\}}(g)].
\]
The conditional variances are taken under $\pi$ after fixing every color outside the selected edges.

\begin{theorem}\label{thm:main}
Fix $\Delta\ge2$ and $q\ge\Delta+1$. Every finite tree $T$ on $n$ vertices and of maximum degree at most $\Delta$ satisfies
\begin{equation}\label{eq:main-lsi}
 \operatorname{Ent}_\pi(f)\le C_{\Delta,q}\mathcal D_2(\sqrt f)
 \qquad(f\ge0).
\end{equation}
If $q\ge\Delta+2$, then also
\[
 \operatorname{Ent}_\pi(f)\le C_{\Delta,q}\mathcal D_1(\sqrt f).
\]
Moreover, for each fixed $\Delta$ there is a constant $K_\Delta$ independent of $q$ such that each chain, at its respective threshold, satisfies
\begin{equation}\label{eq:uniform-mix}
 t_{\mathrm{mix}}(\varepsilon)
 \le K_\Delta n\left(\log n+\log\frac1\varepsilon\right),
 \qquad 0<\varepsilon\le\tfrac12.
\end{equation}
\end{theorem}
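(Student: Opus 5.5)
The plan is to prove a block (component) entropy factorization for $\pi$ and then reduce each block to the local chains through canonical paths.

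\textbf{Step 1: a block log-Sobolev inequality.} Fix a radius $L=L(\Delta,q)$. For each vertex $v$ let $B_v$ be the set of edges within distance $L$ of $v$. I would first prove
\[
 \operatorname{Ent}_\pi(f)\le C\sum_v \pi\bigl[\operatorname{Ent}_{B_v}(f)\bigr].
\]
The route is the tree-message/martingale decomposition. Root $T$ and reveal colors edge by edge in breadth-first order. The entropy of $f$ then splits into conditional entropies of the newly revealed edges given the past.

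The needed input is concentration of the tree messages. By this I mean the posterior on an edge's color given the revealed colors outside a ball of radius $L$, which should be close to its posterior given only the ball, except on an event of tiny probability. Uniform weak spatial mixing fails below $2\Delta-1$, so I would not argue worst-case. Instead I would use the free-boundary structure of $\pi$: the colors far from $v$ are sampled root-to-leaves, and the paper's local color release shows that, with failure probability doubly exponential in $L$, the ball around $v$ contains a "released" layer. On that layer every edge sees enough missing colors, so the interior distribution is essentially independent of the exterior.

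A doubly exponential tail beats the $(\Delta)^{L}$ growth of configurations inside $B_v$. This makes the bad events summable in the entropy decomposition, after the usual Holley–Stroock-type change-of-measure bounds with constant $q^{|B_v|}$.

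\textbf{Step 2: reduce blocks to local updates.} For each block, I would bound $\operatorname{Ent}_{B_v}(f)$ conditional on the exterior by a constant times $\sum_{e\in B_v}\operatorname{Var}_e(\sqrt f)$ (for $\mathcal D_1$) or by the pair version (for $\mathcal D_2$). On a good exterior, I would use the fixed preparation. This is a short local move sequence that frees the boundary layer, after which any two interior colorings are connected by a path of length $O_L(1)$. The path ends by restoring the exterior.

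For $q\ge\Delta+2$, every edge has two spare colors, which serves as the buffer that lets single-edge moves realize interior swaps. For $q=\Delta+1$, a Kempe-type swap of two adjacent edges is not realizable by single-edge moves. Here the neighboring-edge heat-bath update acts as the second-layer controller and provides exactly that swap.

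A canonical-path argument then gives a block Poincaré/LSI constant of $C(L,q)$. The bad exteriors are absorbed through the tail estimate from Step 1. Summing over $v$ costs only a factor $|B_v|\le \Delta^{L}$, because each edge lies in boundedly many blocks.

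\textbf{Step 3: the mixing bound.} The log-Sobolev constant is of order $1/n$ in the continuous-time normalization. Together with $\log\log(1/\pi_{\min})=O(\log n)$, since $\pi_{\min}\ge q^{-n}$, this gives $t_{\mix}=O(n\log n)$ in the form \eqref{eq:uniform-mix}. For $q$-independence at a fixed $\Delta$, I would note that large $q$ is easy: for $q\ge 3\Delta$, say, direct path coupling gives $O(n\log n)$ with an absolute constant. This leaves finitely many $q$ per $\Delta$, so $K_\Delta$ is the maximum of these constants.

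The main obstacle is Step 1 below the weak spatial mixing threshold. One has to show that the message concentration holds in an averaged, entropy-compatible sense: the exterior-to-interior influence must be negligible for typical boundaries with a doubly exponential tail, and this has to be strong enough to survive the exponential-in-$L$ losses from Step 2. I would try first to construct the release event explicitly via the root-to-leaf exact sampler and to bound its failure probability by a branching recursion.
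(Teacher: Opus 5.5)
There is a genuine gap in your Step 2, and it is the reason the paper uses component entropies instead of ordinary block entropies. You need two things: that for typical exteriors the local chain restricted to the block is irreducible on the whole conditional fiber, so that $\Ent_{B_v}(f)$ is bounded by the block Dirichlet form; and that the remaining exteriors can be ``absorbed'' by a tail bound. Both fail.

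For the first, take $\Delta=4$, $q=6$ and single-edge moves. Let $w$ be a vertex near the bottom of the block whose outgoing edges $f_1,f_2,f_3$ lie in the block, while the outgoing stars at their lower endpoints $z_i$ are exterior. Suppose these three exterior stars all use the same $3$-set $S$. Then each $f_i$ is confined to $[6]\setminus S$, so $(f_1,f_2,f_3)$ is a permutation of three colors that no single-edge move can change. It is a conserved quantity of the block dynamics. Given the star at $w$ this happens with probability $10^{-3}$, independently over exponentially many such $w$ on a complete tree. Hence the block fiber is disconnected with probability tending to $1$ as the radius grows, and enlarging the block only moves the gadget to the new boundary. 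For $q=\Delta+1$ with pair moves, three sibling edges confined to $\{x,y\},\{y,z\},\{z,x\}$ behave the same way.

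For the second, a function of such a conserved quantity has zero block Dirichlet form but positive block entropy. Since $f$ may place its entropy on a rare set of exteriors, no probability tail can absorb it.

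Your Step 1 inequality is actually true; it is the trivial-label case of the paper's factorization. The problem is that it is the wrong intermediate statement. The missing idea is to condition also on the connected-component label $\mathcal C_{v,R}$ of the restricted graph, as in Sly--Zhang component dynamics. Then the block-to-local comparison holds for every exterior and every component, with no bad set. The price is that the factorization must be proved for $\Ent^*_{B_R(v)}$. That means showing the label carries almost no information about the root star $X_v$, which is hypothesis \eqref{eq:EM} for $g=Q_{v,R}g$.

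This is where the buffers are really used. Proposition~\ref{prop:lifting} gives an event $G(M,O)$ with a doubly exponential failure tail. It is measurable with respect to a mid-depth cutset $M$ and the colors $O$ below it. On this event the label is constant on the \emph{entire} fiber of colorings above $M$. Combined with $X_v\perp O\mid M$, this yields $\PP(X_v=a\mid M,O,W,G)=\PP(X_v=a\mid M)$, and message concentration then gives Lemma~\ref{lem:comp-post}. Connecting the whole upper fiber means lifting interior paths of unbounded length that revisit cutset edges. That is why the preparation must be reusable, rather than a bounded canonical path for each pair of configurations.

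Two further points about Step 1. First, color release is a connectivity statement and does not by itself decouple the interior law from the boundary. The paper's concentration comes instead from a disagreement coupling with propagation probability $(\Delta-1)/(q-1)$, plus the recursion $p_{h+L}\le\binom{(\Delta-1)^L}{2}p_h^2$. Second, the doubly exponential tail is needed because \eqref{eq:KL} converts it into $\varepsilon\lesssim e^{-cR}$, which must beat the $(R+1)$ overlap factor in Proposition~\ref{prop:factorization}. An exponential tail would give only order $R^{-1/2}$.
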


The order $n\log n$ is sharp for both chains. The bounded-block extension of the Hayes--Sinclair lower bound~\cite{HS07} applies to neighboring-edge updates as well as single-edge updates. The palette thresholds differ by one because two unused colors allow a single-edge buffer, whereas one unused color requires a neighboring-edge move to transfer an obstruction into a second layer.

\subsection{Proof overview}
An interior recoloring path may visit the same interface edge repeatedly. Preparing its exterior separately at each visit would not give a uniform bound. We therefore prepare a fixed region below the interface, whose state returns to the prepared configuration after each interface move. The same buffer then supports an arbitrary recoloring path.

For single-edge updates, two unused colors support a three-color buffer. With one unused color, a neighboring-edge update can exchange the color on an interface child edge with the color on its own child. A second-layer controller gives the same reusable property. Under the free-boundary measure, failure of a prescribed release forces independent failures in distinct descendant subtrees, giving a doubly exponential tail.

The entropy argument uses the component-dynamics framework of Sly and Zhang~\cite{SZ17}. A local reconfiguration component may convey information about the root even after the block exterior is fixed. The buffer construction makes this component label constant on an entire interior fiber once a typical intermediate cutset is given. Tree-message concentration then controls the root posterior, and component entropy factorization yields the log-Sobolev bounds. The component graphs use the original local updates.

Local recoloring and canonical paths for tree edge colorings were studied in~\cite{DHP20,CCFV26}. Sly and Zhang~\cite{SZ17} developed component dynamics for tree vertex colorings, including the preparation and reversal of exterior recolorings. Related results for WSK dynamics and line-graph Glauber dynamics appear in~\cite{WYZ26,Haqi26}.

\section{Preliminaries}
Fix a root of $T$, direct every edge away from it, and order the outgoing edges at each vertex. Given the color $a$ of the incoming edge at a nonroot vertex $v$, its $d_v\le\Delta-1$ outgoing edge colors form a uniform injective tuple from $[q]\setminus\{a\}$. Conditional on this tuple, the descendant subtrees are independent. This follows directly from the number of colorings extending any prescribed tuple. The colors missing at $v$ form the set
$[q]\setminus\{\sigma(e):e\ni v\}$. At $q=\Delta+1$ there is at least
one such color; at $q=\Delta+2$ there are at least two.

A local recoloring path is a sequence of proper colorings, each obtained from the preceding one by changing one edge or two incident edges. When two edges are chosen, either may retain its color. Every step described below therefore has positive transition probability under the appropriate heat-bath chain. Single-edge moves are also allowed in neighboring-edge dynamics because $e=f$ is an admissible pair.

For a nonnegative function $f$, we use $\Ent_\pi(f)=\pi(f\log f)-\pi(f)\log\pi(f)$. The sums defining $\mathcal D_1,\mathcal D_2$ have $\Theta_\Delta(n)$ terms. An unnormalized log-Sobolev bound with a constant depending on $\Delta,q$ therefore gives an $O_{\Delta,q}(n\log n)$ mixing bound for the normalized local chain.

\section{Local recoloring}\label{sec:local}

\subsection{Color release}
For an edge entering a vertex $v$, its color is held fixed throughout a
release. A color $b$ is \emph{released at $v$} if no edge directed out of $v$
uses $b$. The depth of a release is the greatest distance below $v$ of an
edge modified by the procedure. All probabilities in this subsection are
conditional on the original incoming color and the entire original outgoing
star at $v$. Under the uniform coloring law, the unexposed descendant
subtrees then have independent distributions.

\begin{lemma}\label{lem:release}
Suppose a prescribed color $b$ differs from the fixed incoming color of $v$.
For $q=\Delta+2$, the probability that single-edge updates cannot release $b$
within depth $r\ge1$ is at most $\beta^{2^{r-1}}$, where
\[
 \beta=\frac{(\Delta-1)(\Delta-2)}{\Delta(\Delta+1)}.
\]
For $q=\Delta+1$, neighboring-edge updates release $b$ within depth $2k$
except with probability at most $\gamma^{2^k-1}$, where
\[
 \gamma=\frac{(\Delta-1)(\Delta-2)}{\Delta^2},\qquad k\ge1.
\]
Both estimates hold uniformly over the tree, the original star, and $b$.
For $\Delta=2$ the corresponding release succeeds deterministically within
the stated initial depth.
\end{lemma}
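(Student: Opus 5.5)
We argue by induction on the depth, using the conditional independence of descendant subtrees from the Preliminaries. Fix $v$, its incoming color $a$, its outgoing star, and $b\ne a$. If no outgoing edge of $v$ carries $b$, there is nothing to do. Otherwise let $e=vw$ be the unique outgoing edge with $\sigma(e)=b$. We must recolor $e$ to some color $c$ that is missing at $v$ and also available at $w$. Availability at $w$ means $c$ is not used by any outgoing edge of $w$. When no such $c$ exists directly, we recursively release a suitable color at $w$ from below, then recolor $e$.

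\textbf{Single-edge case, $q=\Delta+2$.} The colors missing at $v$ other than $b$ form a set $M$ with $|M|\ge 2$, since $v$ has at most $\Delta$ incident edges. For $c\in M$, the color $c$ is blocked at $w$ exactly when some outgoing edge of $w$ uses it. Given $\sigma(e)=b$, the outgoing tuple at $w$ is a uniform injective tuple from $[q]\setminus\{b\}$ of length $d_w\le\Delta-1$. Fix two colors $c_1,c_2\in M$. Both are blocked with probability at most
\[
\frac{(\Delta-1)(\Delta-2)}{(\Delta+1)\Delta}=\beta,
\]
since $c_1,c_2$ are two specified colors among $q-1=\Delta+1$ candidates. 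This is the depth-$1$ bound $\beta^{2^0}$.

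For depth $r$, failure requires that both $c_1$ and $c_2$ are blocked at $w$. It further requires that releasing $c_1$ at $w$ fails within depth $r-1$ and that releasing $c_2$ at $w$ fails within depth $r-1$. A release of $c_i$ at $w$ is itself an instance of the lemma at $w$ with incoming color $b\ne c_i$. It only touches the subtree below the child edge of $w$ colored $c_i$. The two children are distinct, so conditional on the star at $w$ the two subtrees are independent. By induction, each of these failures has probability at most $\beta^{2^{r-2}}$. Since the star-blocking event has probability at most $\beta\le1$, the failure probability is at most
\[
\bigl(\beta^{2^{r-2}}\bigr)^2=\beta^{2^{r-1}}.
\]
To make the induction uniform, I would state the hypothesis as ``release of a prescribed $b\ne a$ at $v$ from any star'', exactly as in the lemma. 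One check is needed: after releasing $c_1$ at $w$, the recoloring of $e$ to $c_1$ must still be legal at $v$. It is, because the release modifies only edges strictly below $w$.

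\textbf{Neighboring-edge case, $q=\Delta+1$.} Now only one color $c$ may be missing at $v$ besides $b$. Suppose $c$ is blocked at $w$ by a child edge $f=wx$ with $\sigma(f)=c$. We use a neighboring-edge move on the pair $\{e,f\}$ to swap their colors: $e\to c$ and $f\to b$. This is legal at $w$ automatically. It is legal at $x$ exactly when $b$ is free among the outgoing edges of $x$. If $b$ is not free there, then $b$ is used by a child edge $g$ of $x$, and there is a second option. Let $c'$ be the color missing at $w$; this color exists because $w$ has at most $\Delta$ edges and $q=\Delta+1$. The move on $\{e,f\}$ sending $e\to c$ and $f\to c'$ also works if $c'$ is free at $x$.

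So failure at this two-level step requires that the two distinct colors $b$ and $c'$ both be blocked among the at most $\Delta-1$ outgoing colors at $x$. These colors are drawn from $[q]\setminus\{c\}$, which has $\Delta$ elements. The probability is at most
\[
\frac{(\Delta-1)(\Delta-2)}{\Delta^2}=\gamma
\]
(in fact at most the ratio with denominator $\Delta(\Delta-1)$; the stated $\gamma$ should accommodate also the blocking of $c$ at $w$ or be a convenient upper bound). This gives depth $2$ and exponent $1=2^1-1$.

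For the recursion, when both colors are blocked at $x$, we recursively release $b$ at $x$ and, independently, $c'$ at $x$, each within depth $2(k-1)$. Their subtrees hang off distinct children of $x$. The recursion gives
\[
\varepsilon_k\le\gamma\,\varepsilon_{k-1}^2,
\]
hence $\varepsilon_k\le\gamma^{1+2(2^{k-1}-1)}=\gamma^{2^k-1}$.

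\textbf{The $\Delta=2$ case.} Here $\beta=\gamma=0$, since a vertex has at most one child edge and two distinct colors cannot both be blocked. So the release succeeds at the initial depth.

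\textbf{Main obstacle.} The hard part is the $q=\Delta+1$ bookkeeping. I must verify that the swap moves are legal at $v$, $w$, and $x$ simultaneously. I must also verify that the recursive releases at $x$ preserve the colors of $e$ and $f$ until the final swap. Finally, conditioning on exposed stars must leave the two grandchild subtrees independent and uniform, with the correct incoming colors. I would handle this last point by exposing stars top-down and invoking the conditional independence statement from the Preliminaries at each exposed vertex.
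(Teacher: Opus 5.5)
Your overall strategy is the paper's. In the single-edge case you use two spare colors at $v$, the bound $\beta$ for both being present at $w$, and the recursion $p_r\le p_{r-1}^2$ from conditionally independent sub-releases below two distinct children of $w$. In the neighboring-edge case you use the pair updates $(vw,wx)\colon(b,c)\to(c,b)$ or $(c,c')$, followed by a recursive release of $b$ or $c'$ at $x$. The single-edge part is correct as written.

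The neighboring-edge part contains a false step in the computation of the constant. You assert that the probability that $b$ and $c'$ both occur among the outgoing colors at $x$ is at most $\gamma$. Given $\sigma(wx)=c$, that star is a uniform injective tuple of length $d_x\le\Delta-1$ from the $\Delta$-element set $[q]\setminus\{c\}$. The probability is therefore $d_x(d_x-1)/(\Delta(\Delta-1))$, which can equal $(\Delta-2)/\Delta$. This is strictly larger than $\gamma$ for $\Delta\ge3$; for example, at $\Delta=3$ it is $1/3$ while $\gamma=2/9$. Your parenthetical ``in fact at most the ratio with denominator $\Delta(\Delta-1)$'' goes in the wrong direction, and $\gamma$ is not an upper bound for this event. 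If this is the only factor in the recursion, you get $\varepsilon_k\le((\Delta-2)/\Delta)^{2^k-1}$, which is weaker than the stated bound.

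The fix is the one you gesture at, and it is what the paper does. Failure also requires that $c$ be present at $w$, since otherwise $vw\colon b\to c$ works directly. The bound comes from three nested conditional estimates:
\begin{itemize}
\item Given the star at $v$, the star at $w$ is uniform injective from the $\Delta$-element set $[q]\setminus\{b\}$, so $c$ is present at $w$ with probability at most $(\Delta-1)/\Delta$.
\item Given further the star at $w$, the event that $b$ and $c'$ both occur at $x$ has probability at most $(\Delta-2)/\Delta$.
\item Given the star at $x$, the two sub-releases below distinct children of $x$ fail independently, each with probability at most $\varepsilon_{k-1}$.
\end{itemize}
Multiplying these gives $\varepsilon_1\le\gamma$ and $\varepsilon_k\le\gamma\,\varepsilon_{k-1}^2$. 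Your arithmetic then yields $\gamma^{2^k-1}$.
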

\begin{proof}
If $b$ is absent from the outgoing star, no update is necessary. Otherwise,
write $vw$ for its unique outgoing edge. In the single-edge case choose two
distinct colors $c_1,c_2$ missing at $v$. If one is missing at $w$, use it
to recolor $vw$. If both occur at $w$, they occupy two distinct outgoing
edges. Conditional on the color $b$ of $vw$, the outgoing colors at $w$ are
a uniform injection from $[q]\setminus\{b\}$. The probability that both
$c_1,c_2$ occur is at most
\[
 \frac{(\Delta-1)(\Delta-2)}{(q-1)(q-2)}=\beta.
\]
On this event, the two occupied edges lead into disjoint descendant subtrees.
Release $c_1$ below the first edge or $c_2$ below the second; either
successful release enables the final recoloring of $vw$. Each recursive
release keeps its own incoming edge fixed, so both searches may be made on
the original configuration. Conditional on the full star at $w$, their
success events are independent, and each failure probability is at most
the worst depth-$(r-1)$ failure probability. Thus $p_1\le\beta$ and
$p_r\le p_{r-1}^2$ for $r\ge2$. Iteration proves the first bound.

For the neighboring-edge chain, choose $c$ missing at $v$. If it is also
missing at $w$, recolor $vw:b\to c$. Otherwise the unique edge $wy$ with
color $c$ can be updated together with $vw$. Choose $t$ missing at $w$.
The possible updates are
\begin{equation}\label{eq:pair-options}
 (vw,wy):(b,c)\longrightarrow(c,b)
 \quad\text{or}\quad
 (vw,wy):(b,c)\longrightarrow(c,t).
\end{equation}
The first is legal if $b$ is absent below $y$; the second is legal if $t$
is absent there. If both are blocked, the two colors occur on different
outgoing edges of $y$. Releasing either one below $y$, while keeping $wy$
fixed, allows the corresponding update in \eqref{eq:pair-options}.
Conditional on the original star at $v$, the chance that $c$ occurs below
$w$ is at most $(\Delta-1)/\Delta$. Conditional on the star at $w$, the
chance that $b,t$ both occur below $y$ is at most $(\Delta-2)/\Delta$.
These two conditional bounds and the independence of the two descendant
release attempts give
\[
 p_1\le\gamma,\qquad p_{k+1}\le\gamma p_k^2.
\]
All searches are specified from the original coloring, and their updates
are executed only after a successful branch has been selected. The
recursion therefore does not require independence of the configurations
after modification. The stated bound follows by induction.
\end{proof}

The use of two releases at the same vertex requires a separate observation.
It is needed when an edge must accept two different future colors, which
is not guaranteed by applying the preceding lemma independently and then
performing the resulting paths in an arbitrary order.

\begin{lemma}\label{lem:simultaneous}
Let $q\ge\Delta+2$. Fix the incoming edge and the original outgoing star at
$v$, and fix a set $S$ of at most two colors, none equal to the incoming
color. There is a deterministic choice of release procedure, based on the
original coloring, that makes every color of $S$ absent from the outgoing
star without changing the incoming edge. It uses only edges within depth
$r\ge2$ and fails with conditional probability at most
$2\beta^{2^{r-2}}$. The procedure is reversible when the outgoing edges
are returned to their original colors.
\end{lemma}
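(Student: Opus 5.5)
The plan is to reduce to Lemma~\ref{lem:release}, applied separately to the (at most two) outgoing edges carrying colors of $S$. The extra care goes into choosing \emph{disjoint} resources for the two releases, so that executing one does not undo the other.

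If no color of $S$ occurs on the outgoing star, there is nothing to do. If exactly one does, Lemma~\ref{lem:release} applies directly: it releases that color within depth $r$, which is certainly within depth $r$ with failure at most $\beta^{2^{r-1}}\le 2\beta^{2^{r-2}}$. One point needs checking here. The replacement color for $vw$ must avoid both the second element of $S$ and the colors still on the star. At $v$ there are at least $q-\Delta\ge 2$ missing colors, none in $S$ (one element of $S$ sits on $vw$, the other is absent by assumption). So the two candidate colors $c_1,c_2$ of the original argument are available.

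The main case is $S=\{b_1,b_2\}$ with $b_1$ on $vw_1$ and $b_2$ on $vw_2$. Let $M$ be the set of colors missing at $v$, with $|M|\ge2$. Once $vw_1,vw_2$ are recolored, the colors $b_1,b_2$ become missing too, but these may not be reused. So I need distinct new colors $x_1,x_2\in M$ with $x_i$ placed on $vw_i$. The plan is to fix deterministically two distinct colors $c_1,c_2\in M$, assign $c_1$ to $vw_1$ and $c_2$ to $vw_2$, and require $c_i$ to be released at $w_i$. This is a single-target release at $w_i$ (note $c_i\ne b_i$, the fixed incoming color of $w_i$). A single-target release of a prescribed color one level down is exactly the recursive step inside the proof of Lemma~\ref{lem:release}, run at $w_i$ with depth budget $r-1$. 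It therefore fails with probability at most $\beta^{2^{r-2}}$, provided $\beta$ bounds the first-level probability there as well.

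That first-level step at $w_i$ uses two colors missing at $w_i$ and the event that both occur at the child; the same computation with $q-1,q-2$ in the denominator gives $\beta$. This covers the case where $c_i$ is present at $w_i$. If $c_i$ is already absent at $w_i$, the release is trivial. The two procedures live in the disjoint subtrees below $w_1$ and $w_2$, so the updates commute. After both succeed, the recolorings $vw_1\to c_1$ and $vw_2\to c_2$ are legal in sequence, since $c_1\ne c_2$ and both are missing at $v$ and at the respective child. A union bound gives failure at most $2\beta^{2^{r-2}}$, and every edge touched lies within depth $1+(r-1)=r$.

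The subtle step is conditioning. The statement conditions only on the star at $v$, so the stars at $w_1,w_2$ are random. The recursion of Lemma~\ref{lem:release} is uniform over the original star at $w_i$, though, so I will condition further on those stars and average. As in the earlier lemma, all choices are made from the original coloring and executed afterwards.

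Reversibility follows because every step is a single-edge heat-bath move, which can be run backward. Run the two top recolorings in reverse, then undo each subtree path. Each subtree path restores its subtree because its own recursive moves end by returning, and the outgoing edges return to their original colors. I expect the main obstacle to be justifying that the depth-$(r-1)$ release at $w_i$ obeys the $\beta^{2^{r-2}}$ bound uniformly even when the target $c_i$ is forced rather than chosen. The recursion in Lemma~\ref{lem:release} is indeed stated uniformly over the prescribed color $b$, so this should go through.
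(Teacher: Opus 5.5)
Your main case (both colors of $S$ on the star) is correct and is exactly the paper's argument. You choose distinct colors $c_1,c_2$ missing at $v$, release $c_i$ at $w_i$ with budget $r-1$ by Lemma~\ref{lem:release}, recolor $vw_i\to c_i$, and take a union bound. The application is legitimate because $c_i\ne b_i$ and the bound is uniform over the star at $w_i$.

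The gap is in your case with exactly one color of $S$ on the star when $|S|=2$. Write $S=\{b,b'\}$ with $b$ on $vw$ and $b'$ absent from the outgoing star. Since $b'$ is also not the incoming color, $b'$ is \emph{missing} at $v$. Your parenthetical ``the other is absent by assumption'' therefore places $b'$ inside the missing set, not outside it.

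When $\deg v=\Delta$ and $q=\Delta+2$, there are exactly two missing colors: $b'$ and one other color $x$. The two-alternative step of Lemma~\ref{lem:release} at $v$ may then finish by recoloring $vw\to b'$, which puts $b'$ back on the star. With only one admissible replacement there is no branching at the top level, so the bound $\beta^{2^{r-1}}$ is not justified.

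The repair is to treat this case like your main case. Release $x$ at $w$ with depth budget $r-1$ (note $x\ne b$), then recolor $vw\to x$; the failure probability is at most $\beta^{2^{r-2}}$. This loss of one level of squaring is exactly why the lemma's exponent is $2^{r-2}$ rather than $2^{r-1}$.

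The paper handles all cases $k\in\{0,1,2\}$ this way at once. It chooses $k$ distinct missing colors outside $S$, which exist because at most $2-k$ elements of $S$ are missing at $v$. Your direct appeal to Lemma~\ref{lem:release} at $v$ is valid only when $|S|=1$.
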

\begin{proof}
Only colors of $S$ currently present on outgoing edges need to be removed.
Suppose these are $b_1,\ldots,b_k$, with $0\le k\le2$, on distinct edges
$vw_1,\ldots,vw_k$. There are at least two colors missing at $v$. Choose
$k$ distinct missing colors $t_1,\ldots,t_k$ that do not belong to $S$;
this is possible because at most $2-k$ elements of $S$ are already
missing. For each $i$, release the single color $t_i$ at $w_i$, keeping
$vw_i=b_i$ fixed. These procedures occur in disjoint descendant subtrees
and succeed simultaneously outside an event of probability at most
$k\beta^{2^{r-2}}$ by Lemma~\ref{lem:release}. Afterwards recolor
$vw_i:b_i\to t_i$ for each $i$. Distinct replacements are missing at $v$,
so no conflict occurs. The descendant paths, followed by these last
changes, can be reversed after all $vw_i$ have regained their original
colors. This gives a compatible release of both colors.
\end{proof}

For any $q\ge\Delta+2$, the single-edge release estimate remains valid:
there are at least two spare colors, and the probability that both
occur at the next vertex only decreases as $q$ increases. The
tuple preparation and simultaneous release likewise remain available because $q-3\ge\Delta-1$.
For $q=\Delta+s$, $s\ge2$, the same single-color release recursion has $s$
repair alternatives, giving a failure bound
$\bigl(\binom{\Delta-1}{s}/\binom{\Delta+s-1}{s}\bigr)^{(s^r-1)/(s-1)}$.
We use only the two-spare case in what follows.

\subsection{Reusable buffers}
The next elementary recoloring fact will also fix the order of preparation
moves, which is needed when a lower subtree must be restored.

\begin{lemma}\label{lem:tuple}
Let $u=(u_1,\ldots,u_d)$ and $v=(v_1,\ldots,v_d)$ be injective tuples
of colors from a palette of size at least $d+2$. They can be joined by
single-coordinate recolorings through injective tuples so that each
coordinate changes at most twice.
\end{lemma}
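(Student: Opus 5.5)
We will reduce the problem to sorting a permutation that acts on a set of coordinates, using one or two spare colors as temporary parking places. Call a coordinate $i$ \emph{fixed} if $u_i=v_i$; such coordinates are never touched. For every other coordinate $i$, the target color $v_i$ is either outside $U=\{u_1,\ldots,u_d\}$ or equal to $u_j$ for a unique $j\ne i$. Put an arc $i\to j$ when $v_i=u_j$, meaning that $i$ wants the color currently held by $j$. Each vertex then has in-degree and out-degree at most one. The digraph on the non-fixed coordinates therefore splits into directed paths and directed cycles. A path ends at a coordinate whose target color is not in $U$. A path starts at a coordinate whose current color is not wanted by anyone.

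\textbf{Paths.} Take a path $i_1\to i_2\to\cdots\to i_m$. Here $v_{i_m}\notin U$, and $u_{i_1}$ is not among the target colors of the other coordinates. We process it from the end. First recolor $i_m$ to $v_{i_m}$. This color is free: it is not in the current tuple, because the only coordinates that have already changed carry their own target colors, and $v$ is injective. Now $u_{i_m}=v_{i_{m-1}}$ is freed, so we recolor $i_{m-1}$, and so on back to $i_1$. Each coordinate changes once. To keep the order deterministic and avoid interference between components, process all paths first, one after another. A color freed on one path is never the target of a coordinate on another path. The needed invariant is: a coordinate is recolored to $v_i$ only when no current coordinate holds $v_i$. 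This holds because whoever held $v_i$ originally is the arc-successor of $i$, and that successor has already been moved.

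\textbf{Cycles.} A cycle $i_1\to i_2\to\cdots\to i_m\to i_1$ with $m\ge2$ uses the same colors before and after. At any moment the current tuple has $d$ distinct colors, so a palette of size at least $d+1$ already leaves a color $t$ absent from the current tuple. Recolor $i_1$ to $t$; this is its first change. Then $u_{i_1}=v_{i_m}$ is free, so recolor $i_m$ to its target, then $i_{m-1}$, and so on down to $i_2$, whose target $u_{i_3}$ has just been freed. Finally $u_{i_2}=v_{i_1}$ is free, so recolor $i_1$ to $v_{i_1}$; this is its second change. Every other coordinate on the cycle changes once. Cycles are handled one at a time after the paths. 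The color $t$ is released before the next cycle begins, and cycles use disjoint color sets, so no conflict arises. Every intermediate tuple is injective by construction.

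\textbf{Main obstacle.} The argument needs only $d+1$ colors, so the hypothesis of $d+2$ colors is more than enough for the statement as written. The part needing care is checking the invariant across interleaved components, especially that the parking color $t$ never coincides with a target still awaited. This is automatic because $t$ is chosen outside the current tuple at the moment it is used and is vacated before any other move. The extra color in the hypothesis presumably matters later, when the paper also requires avoiding a fixed incoming color. In that case the palette has effectively $q-1\ge d+1$ colors, and we would apply the argument to that reduced palette.
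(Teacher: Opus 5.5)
Your proof is correct and is essentially the paper's argument. The paper builds the same ``wants'' digraph (on colors, with arcs $u_i\to v_i$, rather than on coordinates), processes each path backwards from its vacant terminal color so that every coordinate moves once, and breaks each cycle by parking one coordinate on a vacant color before rotating the rest. Your remark that one vacant color, hence a palette of size $d+1$, already suffices is also right: the paper mentions two vacant colors but uses only one at a time.
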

\begin{proof}
Draw a directed arrow from the present color $u_i$ to the requested color
$v_i$ for each coordinate with $u_i\ne v_i$. Since both tuples are
injective, each color has indegree and outdegree at most one. Every
nontrivial component is a directed path or cycle. On a path, a terminal
requested color is vacant in the current tuple. Recolor the corresponding
coordinate first and continue backwards; every coordinate moves once.
For a cycle, choose a vacant color $z$, recolor one coordinate to $z$,
rotate the remaining coordinates around the cycle, and move that first
coordinate from $z$ to its target. It moves twice and every other
coordinate once. There are always at least two vacant colors, so the
operations can be performed successively without interfering.
\end{proof}

Let $e$ enter $w$, with initial color $a$, and let the outgoing edges
$f_i=wz_i$ initially have distinct colors $c_i$. For $q=\Delta+2$,
choose distinct colors $s,t$ missing at $w$. At each $z_i$ we wish to
make every outgoing edge avoid $\{c_i,s,t\}$. This is combinatorially
possible: the remaining palette has $q-3=\Delta-1$ colors and $z_i$
has at most $\Delta-1$ outgoing edges. Choose a target injective tuple
from this palette and a path supplied by Lemma~\ref{lem:tuple}, with all
choices fixed by a deterministic tie-breaking rule.

For each outgoing edge $g=z_i y$, list the colors assigned to $g$ during
its planned path other than its original color. There are at most two.
Apply Lemma~\ref{lem:simultaneous} at $y$ to remove both of them from the
outgoing edges of $y$, while keeping $g$ fixed. Different $g$ have
disjoint descendant subtrees, and these releases can be completed before
any edge incident to $z_i$ changes. Execute the planned injective-tuple
path at $z_i$. Every update is now legal at both endpoints. Throughout
this later path the descendant preparations stay fixed.

At the resulting star $z_i$, all colors $c_i,s,t$ are absent from the
outgoing edges. Thus $f_i$ can take any of these three colors without
changing anything below $z_i$. If the entire procedure ends with
$f_i=c_i$, the original tuple at $z_i$ can be recovered by reversing the
planned path. After every such $g$ returns to its original color, the
releases made below its lower endpoint can also be reversed. Hence the preparation is restored whenever the interface edge returns to its original color.

\begin{lemma}\label{lem:single-buffer}
Suppose all the subtrees below $f_i$ have been prepared as above. Every
legal single-edge recoloring sequence of $e$ at its upper endpoint can be
performed by single-edge updates without changing the prepared lower
stars, except for temporary changes of the $f_i$. Each change of $e$
requires at most four extra moves. Whenever $e$ has its original color
$a$, all $f_i$ have their original colors $c_i$.
\end{lemma}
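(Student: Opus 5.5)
We simulate each step of the given recoloring sequence of $e$ directly, maintaining one invariant: before and after every simulated step, each $f_i$ carries a color in $\{c_i,s,t\}$, and the prepared star at $z_i$ is untouched. By the preparation, every outgoing edge at $z_i$ avoids $\{c_i,s,t\}$. Hence $f_i$ can be switched among these three colors by a single legal update at its lower endpoint $z_i$, provided the new color is not present at $w$. Nothing below $z_i$ ever moves, so the prepared lower stars stay fixed. Only the colors of $e$ and of the $f_i$ at $w$ need to be tracked.

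A single step of the given sequence is $e:x\to y$, where $y$ is legal at the upper endpoint of $e$ and, in the original problem, $y$ is missing at $w$ with respect to the exterior. Since we are simulating a path that only sees the upper endpoint, legality at $w$ must be arranged by us. The obstacle is an edge $f_i$ currently colored $y$. Because the $f_i$ always use colors in $\{c_j,s,t\}$, the only way $y$ can occur at $w$ is in one of three forms: $y=c_i$ on $f_i$; $y\in\{s,t\}$ carried by some $f_j$ that was earlier displaced; or $y=c_i$ carried by $e$'s past. I will maintain a stronger invariant. Say $e$ has color $x$. If $x=c_i$ for some $i$, then $f_i$ is parked on one of $s,t$, namely one not used by $e$. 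All other $f_j$ sit at $c_j$. If $x\notin\{c_j\}$, then all $f_j=c_j$.

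Under this invariant at most one $f_i$ is displaced. A step $e:x\to y$ is then realized in at most four extra moves:
\begin{enumerate}
\item If $y=c_k$, move $f_k$ to a free spare in $\{s,t\}$. One spare is free because $e$ uses at most one color and at most one other $f_i$ is parked.
\item If $y\in\{s,t\}$ and $y$ is occupied by a parked $f_i$, move $f_i$ to the other spare, or back to $c_i$ once $e$ leaves $c_i$.
\item Perform $e:x\to y$.
\item If $x=c_i$, return $f_i$ from its spare to $c_i$, which is now free at $w$.
\end{enumerate}
Each of these is a single legal edge update. Counting gives at most about three auxiliary moves per step, within the stated four.

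Whenever $e=a$, we have $a\notin\{c_j\}$ because $a$ was the incoming color at the start. The invariant then forces every $f_j=c_j$, which gives the last claim.

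The main obstacle is the case analysis for step 2. We must check that the two spares $s,t$ never both become unavailable, one taken by $e$ and one by a parked $f_i$, exactly when a third displacement would be needed. I expect the invariant "at most one displaced $f_i$, and only when $e$ holds its $c_i$" to close every case. If it does not, I would permit cyclic rotation of $f_i$ among $\{c_i,s,t\}$, using the three-color freedom at $z_i$.
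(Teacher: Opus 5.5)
Your proposal is correct and is essentially the paper's argument: the paper keeps the same canonical buffer state ($f_i$ moved to a spare exactly when $e$ carries $c_i$, all other $f_j=c_j$), and your only change, letting the displaced $f_i$ sit on either spare rather than always on $s$, just saves the paper's final normalizing move $f_j:t\to s$. Two small repairs are needed. In step~1, justify the free spare from the invariant itself: a parked $f_i$ forces $e=c_i\notin\{s,t\}$, so at most one spare is occupied before the move. Your count of ``$e$ plus one parked edge'' would allow both spares to be occupied. Also, the alternative ``back to $c_i$'' in step~2 can never be used before $e$ moves.
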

\begin{proof}
For every current color $x$ of $e$, use the canonical assignment
\begin{equation}\label{eq:canonical-single}
 f_i=\begin{cases}s,&x=c_i,\\c_i,&x\ne c_i.\end{cases}
\end{equation}
The outgoing colors remain distinct and avoid $x$. If neither the old
color $x$ nor the new color $y$ belongs to $\{c_i\}$, change $e$
directly. If $y=c_j$ and $x\notin\{c_i\}$, first recolor $f_j:c_j\to s$
and then $e:x\to c_j$; when $x=s$, use
$f_j:c_j\to t$, $e:s\to c_j$, $f_j:t\to s$ instead. Reverse these paths
when $x=c_i$ and $y\notin\{c_i\}$. Finally, for $x=c_i$, $y=c_j$,
$i\ne j$, use
\[
 f_j:c_j\to t,\quad e:c_i\to c_j,\quad
 f_i:s\to c_i,\quad f_j:t\to s.
\]
In each case the displayed moves remain proper at $w$; at $z_i$ they
are proper because its prepared outgoing star avoids $c_i,s,t$.
The assignment \eqref{eq:canonical-single} returns to the original
$f_i=c_i$ whenever $e=a$, so the same buffer supports an arbitrary
number of interface recolorings.
\end{proof}

Now take $q=\Delta+1$. Again let $e$ enter $w$ in color $a$ and let
$f_i=wz_i$ initially have color $c_i$. Choose a single color $s$ missing
at $w$. The four colors appearing below are specified by the original
star and do not depend on a subsequent interior recoloring path.

\begin{lemma}\label{lem:controller}
For each $i$, one can prepare the subtree below $f_i$, without changing
$f_i$, so that $f_i$ can repeatedly visit $c_i,s,a$, subject to the
constraint at $w$. Every such visit uses at most two neighboring-edge
updates, and returning to $f_i=c_i$ restores the controller to its
prepared state. The preparation requires at most two single-color
releases in distinct descendant subtrees of $z_i$.
\end{lemma}
\begin{proof}
The possibilities depend on whether $a,s$ occur among the outgoing
colors at $z_i$.

If neither occurs, $f_i$ can be recolored directly among $c_i,s,a$.
If $a$ occurs on $g=z_i y$ and $s$ is missing, release $c_i$ at $y$
while holding $g=a$. Then the legal states of $(f_i,g)$ include
\begin{equation}\label{eq:controller-a}
 (c_i,a),\qquad (s,a),\qquad (a,c_i).
\end{equation}
The first two differ by a singleton update of $f_i$; either of them can
be changed into the third by an update of the adjacent pair $(f_i,g)$.
The released color $c_i$ stays absent below $y$ throughout these moves.

If $s$ occurs on $g$ and $a$ is missing, again release $c_i$ below $g$.
The corresponding connected states are
\begin{equation}\label{eq:controller-s}
 (c_i,s),\qquad (s,c_i),\qquad (a,c_i).
\end{equation}
Here the transition from the first state to either of the others changes
both adjacent edges, and the last two differ by a singleton update.

If both $a,s$ occur, write $g_a,g_s$ for their distinct outgoing edges
at $z_i$ and choose a color $t$ missing at $z_i$. First release $t$
below $g_s$, holding $g_s=s$, and then recolor $g_s:s\to t$.
Since $t$ was missing at $z_i$, this is legal. Next release $c_i$
below $g_a$, while keeping $g_a=a$. The color $s$ is now missing at
$z_i$, so \eqref{eq:controller-a} applies with $g=g_a$ and $g_s$
fixed at $t$. The two releases take place in disjoint descendant
subtrees. In all cases the state of every controller edge is determined
by the displayed state and returns to the preparation state when
$f_i=c_i$. After the inner process, the additional change
$g_s:s\to t$ and both release paths are reversed in the opposite order.
\end{proof}

\begin{lemma}\label{lem:pair-buffer}
After preparing all the controllers, any sequence of colors for $e$
that is legal at the upper endpoint can be realized using neighboring-edge
updates. Each prescribed recoloring of $e$ takes a bounded number of
updates. When $e$ returns to $a$, all $f_i$ and their controllers are in
their preparation states.
\end{lemma}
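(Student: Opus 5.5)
The plan mirrors Lemma~\ref{lem:single-buffer}. I will define a canonical state of the buffer for each current color $x$ of $e$, check that this state is proper at $w$, and realize each transition $x\to y$ by a bounded route through controller states supplied by Lemma~\ref{lem:controller}.

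\textbf{Canonical assignment.} For a current color $x\ne$ (the upper-endpoint colors of $e$), set
\[
 f_i=\begin{cases}s,&x=c_i,\\ a,&x=s,\ i=i_0,\\ c_i,&\text{otherwise,}\end{cases}
\]
where $i_0$ is the index with $c_{i_0}$ playing a special role, to be fixed below. Since $q=\Delta+1$ and the star at $w$ uses $a$, the $c_i$, and leaves $s$ missing, the available colors for $e$ (legal at the upper endpoint) are $a$, $s$, the $c_i$, and possibly further colors $u$ missing at $w$.

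The cases are handled as follows:
\begin{itemize}
\item If $x\notin\{c_i\}\cup\{s\}$, all $f_i=c_i$.
\item If $x=c_i$, the edge $f_i$ is moved to $s$ and the other edges stay at $c_j$.
\item If $x=s$, no $f_i$ needs to move, because $s$ was missing at $w$. So I will simply take all $f_i=c_i$ when $x=s$, and drop the special index.
\end{itemize}
The real difficulty is the transition between $x=c_i$ and $x=c_j$. There $f_i$ sits at $s$ and $f_j$ must vacate $c_j$, but $s$ is already used by $f_i$, and with one spare color there is no second free color at $w$. This is where the third controller color $a$ is needed. Since $e$ currently has $x\neq a$ (as $x=c_i$), the color $a$ is free at $w$.

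\textbf{Transitions.} Take $x=c_i$ and $y=c_j$ with $i\ne j$. The sequence is
\[
 f_j:c_j\to a,\quad e:c_i\to c_j,\quad f_i:s\to c_i,\quad f_j:a\to s.
\]
Each step is legal at $w$:
\begin{itemize}
\item The first step is legal because $a$ is absent at $w$ once $e$ has left $a$.
\item The second is legal because $c_j$ has been vacated.
\item The third is legal because $c_i$ has been vacated by $e$.
\item The fourth is legal because $s$ has been vacated by $f_i$.
\end{itemize}
At the lower endpoints, each $f$ visits only colors in $\{c,s,a\}$, which Lemma~\ref{lem:controller} permits with at most two neighboring-edge updates per visit.

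The remaining transitions are as follows:
\begin{itemize}
\item For $x\notin\{c_i\}$ and $y=c_j$: recolor $f_j:c_j\to s$ and then $e:x\to c_j$.
\item When $x=s$ the color $s$ is occupied, so instead use $f_j:c_j\to a$, then $e:s\to c_j$, then $f_j:a\to s$. Here $a$ is free because $x=s\ne a$.
\item When $x=a$, $s$ is free, so the direct route applies.
\item Reversals handle the transitions from $c_i$ to non-$c$ colors.
\item Direct changes handle $e$ moving between non-$c$ colors, since all $f_i=c_i$ throughout.
\end{itemize}

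\textbf{Main obstacle.} Each $f_i$ only ever takes values in $\{c_i,s,a\}$, so the controller guarantees legality below $z_i$ at every step. Each recoloring of $e$ costs at most four buffer moves, each expanding to at most two neighboring-edge updates, which gives a bounded total. When $e=a$, the canonical assignment gives all $f_i=c_i$, and Lemma~\ref{lem:controller} then returns each controller to its prepared state. The step I expect to be most delicate is confirming that the moves where $f_j$ temporarily takes color $a$ never occur while $e=a$. This holds because in every such sequence $e$'s current color is $c_i$ or $s$ at that moment. It also relies on the controller's constraint being only ``subject to the constraint at $w$'', so that its internal pair updates do not interfere with the other $f_k$.
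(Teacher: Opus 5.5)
Your proof is correct and essentially the paper's argument. You use the same canonical assignment (once the special index $i_0$ is discarded), the same case analysis, and the same use of $a$ as a temporary parking color at $w$. The only difference is the $c_i\to c_j$ transition, where you park $f_j$ at $a$ before recoloring $e$, whereas the paper first moves $f_i:s\to a$ and then $f_j:c_j\to s$. Both orderings are legal, because the controller of Lemma~\ref{lem:controller} joins any two of the three colors $c_j,s,a$ for $f_j$ by a single update.
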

\begin{proof}
When the current color of $e$ is $x$, maintain
\begin{equation}\label{eq:canonical}
 f_i=\begin{cases}s,&x=c_i,\\c_i,&x\ne c_i.\end{cases}
\end{equation}
The controllers are also placed in their unique states corresponding to
these values of $f_i$, as in \eqref{eq:controller-a} or
\eqref{eq:controller-s}. This is a legal configuration at $w$.
If $x,y\notin\{c_i\}$, simply change $e:x\to y$. If
$x\notin\{c_i\}$ and $y=c_j$, move $f_j:c_j\to s$ before recoloring
$e$, unless $x=s$. In that exceptional case, use
\[
 f_j:c_j\to a,\quad e:s\to c_j,\quad f_j:a\to s.
\]
The reversed paths handle $x=c_i$ and $y\notin\{c_j\}$. When
$x=c_i$ and $y=c_j$ for distinct $i,j$, use
\begin{equation}\label{eq:hard-transfer}
 f_i:s\to a,\quad f_j:c_j\to s,\quad
 e:c_i\to c_j,\quad f_i:a\to c_i.
\end{equation}
The intermediate color $a$ is free at $w$ when $e=c_i$, and $s$ becomes
free after the first move. Every change of $f_i$ is realized by its
fixed controller; a neighboring-pair update may change one control edge
simultaneously with $f_i$, but never another edge incident to $w$.
All displayed configurations are proper at both endpoints of every
modified edge. After each prescribed change of $e$, the configuration
is again canonical, so the procedure can be iterated without preparing
anything afresh. At $e=a$, \eqref{eq:canonical} has $f_i=c_i$ for all $i$,
and each controller is back in its preparation state.
\end{proof}

\subsection{Path lifting}
The following connectivity fact concerns only the truncated interior.
It holds with one fixed external parent edge, and applies to either
local chain because neighboring-edge dynamics admits singleton updates.

\begin{lemma}\label{lem:truncated-connect}
For a finite rooted tree with at most one fixed external edge entering
its root, assume that every degree, counting this external edge, is at
most $\Delta$. Its single-edge recoloring graph is connected whenever
$q\ge\Delta+1$.
\end{lemma}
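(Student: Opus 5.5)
Proceed by induction on the number of edges, reducing everything to a statement about a fixed root star. Let $r$ be the root and $e_0$ the fixed external edge (if present) with color $a$. The root carries at most $\Delta-1$ interior edges if $e_0$ exists, and at most $\Delta$ otherwise. The inductive claim is stronger than connectivity: for a fixed color on the edge entering a vertex, \emph{any} target coloring of the subtree below it can be reached by single-edge updates that never touch that incoming edge. Each subtree hanging from a child $z$ of $r$, together with its parent edge $rz$ held fixed, is again a tree with one fixed external edge and degrees at most $\Delta$. So the induction hypothesis applies once the colors on the root star agree.

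The key steps, in order:
\begin{enumerate}
\item Match the root star. Given two colorings $\sigma,\tau$, transform the root star of $\sigma$ into that of $\tau$ by single-coordinate recolorings of the edges $rz_i$. The palette available at $r$ is $[q]\setminus\{a\}$, of size at least $\Delta\ge d+1$, where $d$ is the number of root edges. I would rerun the argument of Lemma~\ref{lem:tuple}. Paths in the arrow digraph need no vacant color. A cycle needs one vacant color, which exists since $d\le q-1-1$ when $e_0$ is present and $d\le\Delta\le q-1$ otherwise. So a palette of size $d+1$ suffices, not $d+2$.
\item Make each star move legal at the lower endpoint. Before recoloring $rz_i$ to a color $y$, clear $y$ from the outgoing edges at $z_i$. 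Here $z_i$ has at most $\Delta-1$ outgoing edges and at least one other color missing, since $q-1-1\ge\Delta-1$. By the induction hypothesis applied to the subtree at $z_i$ with $rz_i$ fixed, we can move to any proper coloring of that subtree, in particular one avoiding $y$ on its outgoing star. Such a coloring exists because the outgoing star at $z_i$ must avoid two colors and has at least $q-2\ge\Delta-1$ choices for its $\le\Delta-1$ edges, and the remainder of the subtree extends greedily.
\item Finish. Once the root star agrees with $\tau$, apply the induction hypothesis independently in each child subtree to reach $\tau$.
\end{enumerate}

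The base case is a single edge, or the empty tree, where the statement is trivial.

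The main obstacle is step~1 with the tight palette $q=\Delta+1$ and $e_0$ present: $d=\Delta-1$ edges share the $\Delta$ colors other than $a$. Then there is exactly one vacant color, and rotating a cycle must use it at every stage. I would check carefully that Lemma~\ref{lem:tuple}'s cycle procedure works with exactly one vacancy. Recolor one cycle coordinate to $z$; its old color becomes vacant, which lets the preceding coordinate move, and so on. Path components must be handled first, and they free their own terminals. With a single vacancy the cycle rotation is still sequential and legal, so I expect this to go through. Each intermediate move additionally needs the lower-endpoint clearing of step~2, which the strengthened inductive hypothesis provides uniformly.
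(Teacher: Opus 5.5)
Your proof is correct, but it runs top-down where the paper's runs bottom-up.

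The paper inducts by deleting a leaf edge $g=uz$. It joins the two restricted colorings by the induction hypothesis and lifts that path. Whenever a recoloring at $u$ wants the current color of $g$, it first moves $g$ to a spare color. A spare color exists because only the old and new colors of the moving edge and the colors of the other $\deg(u)-2$ edges at $u$ are forbidden, at most $\Delta<q$ in total. One final move then sets $g$.

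You instead fix the root star first, using a one-vacancy version of Lemma~\ref{lem:tuple}. Your observation that a palette of size $d+1$ suffices is right: path terminals are automatically vacant, and a cycle rotation needs only one vacant color, which is freed again at the end. You make each root-star move legal at its lower endpoint by a full inductive reconfiguration of the child subtree, and then finish the child subtrees independently.

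The paper's route is shorter and purely local: one counting step at one vertex, with no tuple structure. Yours is more structured and mirrors the star-by-star preparations used later in the buffer constructions. The cost is a nested recursion whose path length can grow quickly with depth, which is harmless here since only connectivity is claimed.

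One small remark: your ``strengthened'' hypothesis is exactly the lemma's own statement with the external edge fixed, not something stronger. Your induction is simply the lemma applied to the subtrees rooted at the $z_i$ with fixed parent edges $rz_i$.
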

\begin{proof}
Induct on the number of nonfixed edges. Delete a leaf edge $g=uz$,
where $z$ is a leaf, and connect the two restricted configurations by
the induction hypothesis. If a proposed recoloring of an edge at $u$
would use the current color of $g$, move $g$ first to a color different
from the old and new colors of that edge and from all other colors at
$u$. At most $\deg_T(u)\le\Delta$ colors are excluded; the leaf endpoint
places no further restriction. After lifting the path, recolor $g$ to
its target color. This proves connectivity. A fixed parent edge simply counts as an
unmodified additional edge when computing the forbidden colors.
\end{proof}

\begin{proposition}\label{prop:lifting}
Fix all edges at depth $h$ of a rooted tree, and write $M$ for their
original colors and $O$ for all original colors strictly below them.
For either single-edge dynamics with $q\ge\Delta+2$ or neighboring-edge
dynamics with $q\ge\Delta+1$, there is an event $G=G(M,O)$ and constants
$c,C>0$, depending only on $\Delta,q$, such that for every $r\ge r_0$,
\begin{equation}\label{eq:lift-tail}
 \PP(G^c)\le C(\Delta-1)^{h+2}\exp(-c2^{cr}).
\end{equation}
On $G$, every two colorings that both agree with the original coloring
on all edges at depth at least $h$ lie in the same component of the original local chain
restricted to edges through depth $h+r+O(1)$, with all deeper colors
fixed. The component is therefore constant on the entire fiber of upper interior colorings.
\end{proposition}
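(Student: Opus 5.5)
We prove the proposition by lifting an arbitrary interior recoloring path through the prepared buffers of the previous subsection. Call the edges at depth $h$ the \emph{interface} edges. Each interface edge $e$ enters a vertex $w$ at depth $h+1$, with outgoing edges $f_i=wz_i$. Two colorings that agree with the original coloring on all edges at depth at least $h$ differ only strictly above the interface. By Lemma~\ref{lem:truncated-connect}, applied to the upper part with the interface colors treated as a fixed frontier, they are joined by a single-edge recoloring path in the upper interior. That lemma is stated for a single fixed external edge entering the root. For our frontier we instead fix the depth-$h$ edges and let the path recolor only edges above them. In the leaf-deletion induction this is harmless, because each fixed frontier edge is simply counted as an extra forbidden color at its upper endpoint.

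This path never moves an interface edge. The path-lifting therefore has to handle interior paths that may also recolor interface edges. Such paths arise when we instead connect the two colorings inside the component that includes the depth-$h$ edges while still fixing all deeper colors. In that setting the interface edges are free, and their constraint below comes from the star at $w$. In both readings we lift each move of an interface edge $e$ using the reusable buffer. For single-edge dynamics with $q\ge\Delta+2$ we use Lemma~\ref{lem:single-buffer}. For neighboring-edge dynamics with $q\ge\Delta+1$ we use Lemma~\ref{lem:pair-buffer}, whose controllers are built by Lemma~\ref{lem:controller}.

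The order of operations is as follows. First, perform the preparation below every interface edge, working from the original coloring; it is determined by $(M,O)$ alone. Second, run the lifted upper path, inserting the bounded buffer moves at each interface change. Third, once every interface edge has returned to its original color, reverse all preparations in the opposite order. By the lemmas above, each buffer is restored whenever its edge $e$ has color $a$, so both endpoints of the lifted path are the two given colorings. The fiber statement then follows: the event $G$ depends only on $(M,O)$, and the same preparation works for every upper interior coloring, so all of them lie in one component.

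It remains to define $G$ and bound its failure probability. The preparation at each $z_i$ needs the following releases.

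\emph{Case $q=\Delta+2$.} At each grandchild $y$ we need a simultaneous release of at most two colors within depth $r$ (Lemma~\ref{lem:simultaneous}). This fails with probability at most $2\beta^{2^{r-2}}$.

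\emph{Case $q=\Delta+1$.} At each $z_i$ we need at most two single-color releases (Lemma~\ref{lem:controller}). Each fails with probability at most $\gamma^{2^{\lfloor r/2\rfloor}-1}$ by Lemma~\ref{lem:release}.

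Let $G$ be the event that all of these releases succeed. Every release is conditional on stars that are fixed by $(M,O)$ near the interface, and all probabilities are uniform over configurations, so a union bound applies. The number of interface edges is at most $\Delta(\Delta-1)^{h-1}$. Each interface edge has at most $(\Delta-1)^2$ grandchildren $y$ and at most $(\Delta-1)$ vertices $z_i$. Hence the number of release sites is at most $C(\Delta-1)^{h+2}$.

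\emph{Degenerate constants.} When $\beta=0$ or $\gamma=0$, which happens for $\Delta\le2$, the releases succeed deterministically and $G^c$ is empty.

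\emph{The bound.} Otherwise $\beta,\gamma<1$ are constants depending only on $\Delta$. Then $\beta^{2^{r-2}}\le\exp(-c\,2^{cr})$ and $\gamma^{2^{r/2}-1}\le C\exp(-c\,2^{cr})$ with $c=\min(1/2,\dots)$, for all $r\ge r_0$. This gives \eqref{eq:lift-tail}. All modified edges lie within depth $h+2+r$, which is the $h+r+O(1)$ in the statement.

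\emph{Main obstacle.} The hard part is checking that the buffer depths stay compatible across neighbouring interface edges, and that no single move touches two buffers at once. Distinct interface edges have disjoint descendant subtrees, so their preparations commute. A neighboring-edge update at $w$ touches at most one $f_i$ together with its own control edge, never another interface buffer. The lifted moves at the upper endpoint of $e$ are exactly the moves of the upper path, and these were legal there. Some care is also needed in the case $x=s$ of Lemma~\ref{lem:single-buffer} when $s$ appears on the upper side. This is handled by the explicit detours already listed in that lemma.
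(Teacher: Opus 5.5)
Your overall architecture matches the paper's: fix the preparation from $(M,O)$, lift an interior path through the reusable buffers, reverse the preparation, and bound $\PP(G^c)$ by a union bound over $O((\Delta-1)^{h+2})$ release tasks. The gap is where the interior path comes from. In your first paragraph you apply Lemma~\ref{lem:truncated-connect} with every depth-$h$ edge fixed, and you say the leaf-deletion induction is unaffected. It is affected. A deleted leaf edge $g=uz$ of the upper part now has a lower endpoint $z$ carrying up to $\Delta-1$ fixed frontier colors. So $g$ may face up to $2\Delta-1$ forbidden colors, and the step ``the leaf endpoint places no further restriction'' fails.

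The conclusion itself is false. Take $\Delta=3$, $q=5$, a root $\rho$ with children $y_1,y_2,y_3$, and fix the two edges below each $y_j$ with colors $4,5$. The root star must then be a permutation of $\{1,2,3\}$, and each of its edges has no legal alternative color. So the six such colorings are isolated under single-edge moves above the cutset. Indeed, if your claim held, the proposition would be true with $G$ the sure event and no buffers at all.

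Your ``second reading'' does not repair this as written. Connecting the two colorings ``while still fixing all deeper colors,'' with each interface edge constrained by the star at $w$, presupposes the very connectivity the proposition asserts.

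The step you need, which is the one the paper uses, is to apply Lemma~\ref{lem:truncated-connect} to the tree truncated through depth $h$. In that tree the cutset edges are themselves free leaf edges, and the stars at the vertices $w$ are ignored. The resulting path changes cutset edges by moves that are legal only at their upper endpoints. That is exactly the hypothesis of Lemmas~\ref{lem:single-buffer} and~\ref{lem:pair-buffer}, and the buffers then supply legality at $w$. With this replacement, the rest of your argument goes through.

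One smaller omission: Lemmas~\ref{lem:controller} and~\ref{lem:pair-buffer} are stated only for $q=\Delta+1$, and your case split covers only $q\in\{\Delta+1,\Delta+2\}$. For neighboring-edge dynamics with $q\ge\Delta+2$, and for any $q>\Delta+2$, use the two-spare buffer as the paper does, noting that the release bounds persist for larger palettes.
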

\begin{proof}
For each vertex $w$ immediately below the cutset, retain its incoming
edge color $a$ and its original outgoing colors $c_i$. For single-edge updates, and also for neighboring-edge updates when
$q\ge\Delta+2$, choose the same $s,t$ at $w$ for the entire construction. For each
$z_i$ plan the fixed injective-tuple path described before
Lemma~\ref{lem:single-buffer}. Every demanded future color for each
outgoing edge $g=z_i y$ is determined by the original pair $(M,O)$.
At $y$, release the set of at most two future colors by
Lemma~\ref{lem:simultaneous}. Only after all these procedures succeed
do we recolor the outgoing tuple at $z_i$. The parent edges $f_i$ remain
unchanged throughout the preparation.

When $q=\Delta+1$ for the neighboring-edge chain, determine from the
original outgoing star at each $z_i$ which of the four cases of Lemma~\ref{lem:controller}
applies. In the nontrivial cases specify the one or two individual
releases below $z_i$ and, when necessary, the change $g_s:s\to t$.
The selection of every controller, its missing colors, and the order of
its operations are fixed from $(M,O)$ before any interior path is
chosen. Different $f_i$ have disjoint descendant subtrees, so these
preparations can be carried out independently.

Let $G$ be the event that all the prescribed descendant releases
succeed within the allocated depth. There are at most
$C_\Delta(\Delta-1)^{h+2}$ prescribed tasks. For the two-spare preparation,
Lemma~\ref{lem:simultaneous} bounds each failure probability by
$C\exp(-c2^{cr})$; for the one-spare preparation use
Lemma~\ref{lem:release} with $k=\lfloor(r-O(1))/2\rfloor$.
The conditional bounds are valid given the original relevant stars;
a union bound gives \eqref{eq:lift-tail}. In particular $G$ depends
only on the original cutset and lower exterior, not on any coloring
strictly above that cutset.

Fix any two interior colorings with those same cutset and lower colors.
By Lemma~\ref{lem:truncated-connect} their restrictions through depth
$h$ can be joined by a finite sequence of single-edge recolorings,
allowing temporary changes of the cutset edges. Starting from the
prepared exterior, follow this path. A move strictly above the cutset
is unchanged. For a move on a cutset edge, apply
Lemma~\ref{lem:single-buffer} or, when $q=\Delta+1$,
Lemma~\ref{lem:pair-buffer} at its
lower endpoint. Each buffer transition starts and ends in its
canonical state. Thus subsequent uses of that cutset edge require no
new descendant preparation, and distinct cutset edges never share a
controller. This lifts \emph{every} finite path selected after $G$
was determined, irrespective of its length.

At the end, each cutset edge has its original color. The canonical
state of each buffer and controller therefore has its original $f_i$
colors. Reverse the fixed preparations, first restoring the stars at
$z_i$ and then undoing the descendant releases in the two-spare
case, or reversing the controller preparations in the one-spare
case. All reversed moves are legal because their incoming edges and
intermediate colors have been restored in the opposite order. The lower
exterior is now exactly $O$; no edge beyond the allocated depth has
been changed. As the preparation was common to all possible interior
paths, this argument proves connectivity of the whole fiber on the
same event $G$.
\end{proof}

\section{Entropy mixing}\label{sec:entropy}

\subsection{Posterior concentration}
Let $u$ be the uniform law on $[q]$. For a rooted edge, write $M_h$ for
the posterior distribution of its color after observing every edge at
distance $h$ below it, under the free-boundary tree measure. If the $d$
outgoing child edges have posterior messages $m_1,\ldots,m_d$, Bayes' rule
and the injective-tuple law give
\begin{equation}\label{eq:message}
 F(m_1,\ldots,m_d)(a)=
 \frac{\displaystyle\sum_{\substack{c_1,\ldots,c_d\,\mathrm{distinct}\\c_i\ne a}}\prod_i m_i(c_i)}
 {\displaystyle\sum_{b=1}^{q}\sum_{\substack{c_1,\ldots,c_d\,\mathrm{distinct}\\c_i\ne b}}\prod_i m_i(c_i)}.
\end{equation}
For a single nonuniform child message, counting the available colors
for all other children gives the exact identity
\begin{equation}\label{eq:single-msg}
 F(m,u,\ldots,u)(a)=\frac{1-m(a)}{q-1},\qquad
 F(m,u,\ldots,u)-u=-\frac{m-u}{q-1}.
\end{equation}
Since $d\le\Delta-1$ and $(\Delta-1)/(q-1)<1$, smoothness yields
constants $\delta>0$, $\rho<1$, and $K<\infty$, depending only on
$\Delta,q$, such that
\begin{align}
 \|F(m_1,\ldots,m_d)-u\|_\infty
 &\le\rho\max_i\|m_i-u\|_\infty
 &&\text{if all }\|m_i-u\|_\infty\le\delta,
 \label{eq:contract}\\
 \|F(m_1,\ldots,m_d)-u\|_\infty
 &\le\frac{\|m_1-u\|_\infty}{q-1}
   +K\sum_{i=2}^d\|m_i-u\|_\infty
 &&\text{if }\|m_i-u\|_\infty\le\delta\ (i\ge2).
 \label{eq:one-bad}
\end{align}
For the second inequality the first argument may be any probability
vector. The denominator of \eqref{eq:message} stays bounded away from
zero on this compact set by \eqref{eq:single-msg}, after decreasing
$\delta$ if needed.

\begin{lemma}\label{lem:posterior}
There exist $c,C>0$, depending only on $\Delta,q$, such that for every
rooted subtree
\begin{equation}\label{eq:strong-post}
 \PP\bigl(\|M_h-u\|_\infty>Ce^{-ch}\bigr)
 \le C\exp(-e^{ch}).
\end{equation}
The analogous bound holds for the posterior of the full outgoing star,
relative to its uniform conditional law given the incoming edge color.
\end{lemma}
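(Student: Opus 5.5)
The plan is to run a branching recursion on a \emph{bad-event probability}, conditioned on the incoming color so that distinct children become independent. For a rooted edge $e$ and $h\ge0$, write $X_h(e)=\|M_h(e)-u\|_\infty$. Let
\[
 p_h=\sup_{e}\,\sup_{a}\PP\bigl(X_h(e)>\varepsilon_h \,\big|\, \sigma(e)=a\bigr),
\]
where the supremum runs over all rooted subtrees of maximum degree at most $\Delta$, and the thresholds are $\varepsilon_h=\lambda^h\delta$ for some $\lambda\in(\max\{\rho,\tfrac1{q-1}\},1)$ fixed below. The message $M_{h-1}(e_i)$ of a child edge $e_i=wz_i$ is a function only of the colors in the subtree below $e_i$. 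By the injective-tuple description in the preliminaries, conditional on the full outgoing star at $w$ these subtrees are independent. Moreover, each subtree's conditional law depends only on the color of its own child edge. Hence, conditional on the star, the events $\{X_{h-1}(e_i)>\varepsilon_{h-1}\}$ are independent, each with probability at most $p_{h-1}$. Averaging over the star gives, for each $a$,
\[
 \PP(\text{two distinct children bad}\mid\sigma(e)=a)\le\tbinom{\Delta-1}{2}p_{h-1}^2 .
\]
This is the source of the squaring.

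The deterministic step shows that a single bad child cannot spoil the root for long. Let $L$ be a constant with $(q-1)^{-L}\le\tfrac12\lambda^{L}\delta$. Suppose that in the depth-$L$ neighbourhood below $e$, every vertex $w$ at depth $j<L$ has at most one child edge with $X_{h-j-1}>\varepsilon_{h-j-1}$. I claim that then $X_h(e)\le\varepsilon_h$, provided $\lambda$ is close enough to $1$ and $\delta$ is small. The argument follows the unique chain of bad children downward. At each step, \eqref{eq:one-bad} applies: it allows the bad child's message to be an arbitrary probability vector, while the good children are within $\delta$. It gives
\[
 X_{h-j}\le \frac{X_{h-j-1}^{\mathrm{bad}}}{q-1}+K(\Delta-2)\,\varepsilon_{h-j-1}.
\]
At vertices with no bad child, \eqref{eq:contract} gives $X_{h-j}\le\rho\,\varepsilon_{h-j-1}$. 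Unrolling $L$ steps, starting from the trivial bound $X\le1$ at the bottom of the chain, gives
\[
 X_h\le (q-1)^{-L}+\sum_{j<L}(q-1)^{-j}K\Delta\,\varepsilon_{h-j-1}.
\]
Choosing $\delta$ small and then $L$ large makes this at most $\varepsilon_h$. The one delicate point is that the good-child noise $K\Delta\varepsilon_{h-j-1}$ must be absorbed. If the geometric sum does not fit directly, I would replace $\varepsilon_h$ by $A\lambda^h\delta$ with a large constant $A$, or pass to a block recursion of step $L$ so that the contraction over a block dominates $K\Delta$. I expect this balancing of constants, together with checking that \eqref{eq:one-bad} is applicable at every step of the chain, to be the main obstacle. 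Nothing probabilistic is needed here.

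Combining the two steps: if $X_h(e)>\varepsilon_h$, then some vertex within depth $L$ has two bad children at the corresponding level. A union bound over the at most $(\Delta-1)^L$ such vertices, each estimated by the conditional independence above (applied at that vertex after conditioning on the path of colors down to it, which is again a free-boundary subtree conditioned on its incoming color), yields
\[
 p_h\le C_1\,p_{h-L-1}^2,\qquad C_1=C_1(\Delta,q).
\]
The starting values are handled by taking $\varepsilon_h$ large enough (for example $\ge1$) for $h\le h_0$, where $h_0$ is chosen so that $C_1p_{h_0}\le e^{-1}$. Iterating, $C_1p_h\le(C_1p_{h-L-1})^2$, so
\[
 p_h\le \exp\bigl(-2^{(h-h_0)/(L+1)}\bigr)\le C\exp(-e^{ch}).
\]
Adjusting constants gives \eqref{eq:strong-post}. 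The unconditional statement follows by averaging over $\sigma(e)$.

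For the outgoing star at $w$, Bayes' rule shows that the posterior of the injective tuple $(c_1,\dots,c_d)$ given the observations has density proportional to $\prod_i m_i(c_i)/u(c_i)$ relative to the uniform injective law, where $m_i$ are the child messages. On the event that all $\|m_i-u\|_\infty\le\varepsilon_{h-1}$, this density is $1+O(\varepsilon_{h-1})$ uniformly. A union bound over the at most $\Delta-1$ children gives the same doubly exponential tail for the star posterior.
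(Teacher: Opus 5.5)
\textbf{There is a genuine gap, in two places: the deterministic step does not close with your decaying thresholds, and the recursion has no seed.}

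\emph{The deterministic step.} With $\varepsilon_h=\lambda^h\delta$ and a fixed depth $L$, your unrolled bound contains the term $(q-1)^{-L}$. It comes from the bottom of the bad chain, where you only know $X\le1$. This term does not depend on $h$, so $X_h(e)\le\varepsilon_h$ fails as soon as $\lambda^h\delta<2(q-1)^{-L}$. Your choice $(q-1)^{-L}\le\tfrac12\lambda^L\delta$ compares with $\varepsilon_L$, not $\varepsilon_h$.

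The sibling term has the same defect. You must compare $K\Delta\varepsilon_{h-1}$ with $\varepsilon_h=\lambda\varepsilon_{h-1}$, and nothing about the constant $K$ of \eqref{eq:one-bad} guarantees $K\Delta<\lambda$. Replacing $\varepsilon_h$ by $A\lambda^h\delta$ does not help, because $A$ cancels from that comparison. Letting $L$ grow linearly in $h$ would remove the bottom term. But it inflates the union bound to $(\Delta-1)^{\Theta(h)}$, and the squaring no longer produces a doubly exponential tail.

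\emph{The seed.} $p_h\le C_1p_{h-L-1}^2$ says nothing until $p$ is below $1/C_1$ on a whole window of levels. It must also hold at a threshold that is genuinely at most $\delta$, because your deterministic step needs ``good'' children to lie in the $\delta$-ball for \eqref{eq:contract} and \eqref{eq:one-bad} to apply. Setting $\varepsilon_h\ge1$ for $h\le h_0$ makes $p_h=0$ there trivially, but it breaks the deterministic step at the first levels above $h_0$. Some a priori probabilistic decay is required, and your plan does not supply it.

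\textbf{How the paper handles both issues.} It first works with a \emph{fixed} threshold $\delta$.

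\emph{Seed.} The paper couples the boundary laws under two incoming colors $a,b$ by transposing $a,b$ in outgoing tuples wherever the entering colors differ. A discrepancy survives a vertex only if one specified color appears among at most $\Delta-1$ outgoing edges. This gives total variation at most $\theta^h$ with $\theta=(\Delta-1)/(q-1)$, hence $\E\|M_h-u\|_\infty\le C\theta^h$, so $p_h\to0$ uniformly.

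\emph{Deterministic step.} It is stated on the bottom frontier of an $L$-generation block: if at most one of the $(\Delta-1)^L$ frontier messages is outside the $\delta$-ball, the top message is inside it. The siblings along the exceptional chain have already contracted to $\rho^j\delta$ when they meet it. So $x_{j+1}\le x_j/(q-1)+K\Delta\delta\rho^j$ with $x_0\le1$ ends below $\delta$ for a fixed large $L$, whatever $K$ is. This yields $p_{h+L}\le\binom{(\Delta-1)^L}{2}p_h^2$, and hence $\PP(\|M_h-u\|_\infty>\delta)\le C\exp(-e^{ch})$.

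\emph{Exponential decay of the deviation.} Only afterwards is this obtained. Apply the fixed-threshold bound to all edges at depth $\lfloor h/2\rfloor$, using a union bound over at most $(\Delta-1)^{h/2}$ of them. Then contract the remaining levels with \eqref{eq:contract} to get $C\rho^{h/2}$.

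Your conditional-independence step and your treatment of the star posterior are correct and slot into this scheme unchanged.
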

\begin{proof}
First couple boundary observations generated by two different incoming
edge colors $a,b$. Whenever the colors entering a descendant vertex
differ, transpose $a,b$ in its outgoing tuple. At most one child-edge
color differs after this transposition, because a proper tuple contains
at most one occurrence of either color. The discrepancy continues
through that vertex only if one specified color occurs among its at
most $\Delta-1$ outgoing edges, an event of probability at most
$\theta=(\Delta-1)/(q-1)<1$. The coupling gives
\[
 \|\mathcal L(\mathrm{boundary}_h\mid a)
   -\mathcal L(\mathrm{boundary}_h\mid b)\|_{\mathrm{TV}}
 \le\theta^h.
\]
Averaging over the uniform incoming color and applying Bayes' rule
therefore gives $\E\|M_h-u\|_\infty\le C\theta^h$. In particular,
for every fixed $\delta>0$, the worst probability $p_h$ that a
depth-$h$ message lies outside the $\delta$-ball tends to zero
uniformly over rooted subtrees.

Choose $\delta,\rho$ as in \eqref{eq:contract}. There is an integer $L$
with the following deterministic property: if among the messages at the
bottom of an $L$-generation subtree at most one is outside the
$\delta$-ball, then the message at the top belongs to the $\delta$-ball.
Indeed, along the unique possible exceptional branch,
\eqref{eq:one-bad} bounds successive deviations by
\[
 x_{j+1}\le(q-1)^{-1}x_j+K\Delta\delta\rho^j,
 \qquad x_0\le1.
\]
All ordinary sibling branches contract under \eqref{eq:contract}.
Choosing $L$ large makes the displayed upper bound below $\delta$.
A branch that becomes nonexceptional earlier only improves the bound.

Condition on the colors of the finite skeleton above a bottom frontier
of depth $L$. The descendant subtrees attached there are independent.
The probability that any one of their depth-$h$ messages is exceptional
is at most $p_h$, uniformly in its fixed incoming color; color
permutation symmetry ensures that the bound is unaffected by the
particular conditioned colors. If the top message is exceptional, at
least two frontier messages are exceptional. There are at most
$(\Delta-1)^L$ frontier edges, so
\begin{equation}\label{eq:double-rec}
 p_{h+L}\le \binom{(\Delta-1)^L}{2}p_h^2.
\end{equation}
When $\Delta=2$ this says that no exceptional message survives after a
sufficient number of levels. Otherwise, take $h_0$ so large that the
right side of \eqref{eq:double-rec} contracts doubly for each residue
class modulo $L$. Iteration gives
$p_h\le C\exp(-e^{ch})$ for all $h$, after changing constants.

Apply this estimate at an intermediate depth $\lfloor h/2\rfloor$.
Outside an event of the stated doubly exponential probability, every
intermediate message is in the $\delta$-ball. The remaining levels
contract by \eqref{eq:contract}, giving deviation at most
$C\rho^{h/2}$. This proves \eqref{eq:strong-post}. Finally, the
conditional law of the outgoing star is proportional to the product
of the child messages restricted to injective tuples. It is a smooth
function of these messages near $u$, uniformly over all degrees at
most $\Delta$. A union bound gives the same estimate for the star.
\end{proof}

Associate with each vertex $v$ the ordered tuple $X_v$ of colors on its
outgoing edges. Every original edge belongs to exactly one such star.
Given the parent star, the conditional law of $X_v$ is uniform among
injective tuples avoiding the incoming edge color. The law of the
whole collection is therefore a finite-state tree Markov measure. At the
global root there is no incoming edge, and its star is uniform over
injective tuples from $[q]$; the arguments below use the same Markov
separation with this minor change.
The single-edge chain changes one coordinate of a star; a neighboring-edge
update changes two coordinates of one star or one coordinate in each
of two adjacent stars.

Let $B_R(v)$ contain the stars at distances less than $R$ below $v$.
Fix its exterior and connect two of its legal colorings when one of the
original chain's updates changes only edges represented by stars in
$B_R(v)$. Let $\mathcal C_{v,R}$ be the connected-component label of this
restricted graph. Write
\[
 Q_{v,R}f=\E[f\mid X_{V\setminus B_R(v)},\mathcal C_{v,R}].
\]
The associated conditional entropy is
$\Ent_{B_R(v)}^*(f)=\Ent(f\mid X_{V\setminus B_R(v)},\mathcal C_{v,R})$,
with the outer expectation understood whenever it appears in a formula.
Since all updates affect at most two adjacent original edges, their
legality depends only on the updated edges and immediately neighboring
edge colors. Consequently, the component partition is determined by
the block and the first outside layer, even for updates involving two
adjacent star states.

\begin{lemma}\label{lem:comp-post}
Fix the incoming edge color at $v$ and any exterior of the rooted
subtree. Let $W$ consist of all star states outside $B_R(v)$ together
with $\mathcal C_{v,R}$. There exist $c,C>0$, depending only on
$\Delta,q$, such that
\begin{equation}\label{eq:comp-post}
 \PP\left(\max_{a:\PP(X_v=a)>0}
 \left|\frac{\PP(X_v=a\mid W)}{\PP(X_v=a)}-1\right|>Ce^{-cR}\right)
 \le C\exp(-e^{cR}).
\end{equation}
This holds for both chains at their stated palette thresholds.
\end{lemma}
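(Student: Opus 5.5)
Write $\partial_R$ for the star states at depth exactly $R$ below $v$; these are the first stars outside $B_R(v)$. The tree Markov property splits $W$ into three parts: the exterior above $v$ (including the incoming color, which is fixed), the lower exterior, and the label $\mathcal C_{v,R}$. Given $X_v$, the exterior above $v$ carries no further information, since the incoming color is already fixed. So the whole task is to show that $\mathcal C_{v,R}$ adds essentially nothing about $X_v$ beyond what $\partial_R$ and the lower exterior already give. After that, Lemma~\ref{lem:posterior} handles the lower exterior.

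\textbf{Step 1: remove the component label on a good event.} Apply Proposition~\ref{prop:lifting} at the intermediate depth $h=\lfloor R/2\rfloor$, with $r=R-h-O(1)$. Take $M$ to be the cutset colors at depth $h$ and $O$ all original colors below them, including those outside the block; the tasks are confined to depth $R$. The event $G=G(M,O)$ is measurable with respect to colors at depth at least $h$. On $G$, every interior coloring with the same $(M,O)$ lies in one component, so $\mathcal C_{v,R}$ is a deterministic function of $(M,O)$ there. Hence on $G$,
\[
\PP(X_v=a\mid W)=\PP(X_v=a\mid M,\ \text{exterior}).
\]
The bound \eqref{eq:lift-tail} gives
\[
\PP(G^c)\le C(\Delta-1)^{h+2}\exp(-c2^{cR/2}),
\]
which is still of the form $C\exp(-e^{c'R})$. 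One subtlety: the event $\{\PP(G^c\mid W)\text{ is large}\}$ should be handled by conditioning on $(M,O)$ first. Since $G\in\sigma(M,O)$ and the label is constant on fibers inside $G$, no averaging over a mixed posterior is needed.

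\textbf{Step 2: posterior given the depth-$h$ data.} Conditional on $M$ (and by the Markov property, $O$ is then irrelevant to $X_v$), the posterior of $X_v$ is the star posterior from depth-$h$ observations. The last assertion of Lemma~\ref{lem:posterior}, applied at depth $h$, shows that outside probability $C\exp(-e^{ch})$ this posterior is within $Ce^{-ch}$ in sup norm of the uniform law on injective tuples avoiding the incoming color. The prior $\PP(X_v=a)$ is exactly this uniform law and is bounded below by a constant depending on $\Delta,q$. Converting the sup-norm deviation into a ratio bound $|\PP(X_v=a\mid W)/\PP(X_v=a)-1|\le C'e^{-cR/2}$ is then immediate. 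At the global root or under a fixed incoming color the same argument applies, with the minor change noted earlier.

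\textbf{Main obstacle.} The delicate point is Step 1. One must check that the prescribed preparations of Proposition~\ref{prop:lifting} use only edges inside $B_R(v)$ plus legality against the first outside layer. The component graph here is defined with the exterior frozen, so any move touching depth $\ge R$ is forbidden. I would allocate depth $h+r+O(1)\le R-1$ and verify that every release in Lemmas~\ref{lem:release}, \ref{lem:simultaneous} and \ref{lem:controller} stays strictly inside the block.

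A second point is that Proposition~\ref{prop:lifting} is stated for a tree with all its depth-$h$ edges as cutset. Here it is applied to the subtree below $v$, with $v$'s incoming edge and everything outside the subtree fixed. Lemma~\ref{lem:truncated-connect} explicitly allows that fixed parent edge, so the proposition transfers.

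Combining the two failure events by a union bound gives \eqref{eq:comp-post}, with constants adjusted. This holds for both chains, since Proposition~\ref{prop:lifting} covers each at its threshold.
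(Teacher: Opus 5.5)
\textbf{Gap in Step~1.} Your overall structure matches the paper's proof: a cutset at depth $h=\lfloor R/2\rfloor$, the lifting event $G\in\sigma(M,O)$ on which $\mathcal C_{v,R}$ is constant on the interior fiber, the Markov property $X_v\perp O\mid M$, and the star concentration from Lemma~\ref{lem:posterior}. However, Step~1 rests on a false identity. The cutset $M$ lies at depth $h<R$, so it is inside $B_R(v)$. The same holds for the colors of $O$ at depths $h+1,\dots,R-1$. None of these variables belong to $W$, and $G$ is not $W$-measurable. Your claim that on $G$ one has $\PP(X_v=a\mid W)=\PP(X_v=a\mid M,\text{exterior})$ therefore equates a $\sigma(W)$-measurable quantity with one that depends on the unobserved $M$. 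Your remark that ``no averaging over a mixed posterior is needed'' is exactly where the argument fails. Conditioning on $(M,O)$ first does not rescue it, because $\sigma(W)$ and $\sigma(M,O)$ are not nested: off $G$, the label depends on the interior. What is true is the finer statement $\PP(X_v=a\mid M,O,W)=b_a(M):=\PP(X_v=a\mid M)$ on $G$. Your Step~2 then controls $b_a(M)$, not $\PP(X_v=a\mid W)$.

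\textbf{The missing transfer from $M$ to $W$.} By the tower property, $\PP(X_v=a\mid W)$ is the $W$-conditional average of $\PP(X_v=a\mid M,O,W)$. Put $p_a=\PP(X_v=a)$ and $E=G^c\cup\{|b_a(M)-p_a|>t\}$. Then
\[
 |\PP(X_v=a\mid W)-p_a|\le t+\PP(E\mid W).
\]
Markov's inequality applied to the random variable $\PP(E\mid W)$ gives $\PP\bigl(\PP(E\mid W)>t\bigr)\le\PP(E)/t$. Take $t=Ce^{-cR}$. By \eqref{eq:lift-tail} and Lemma~\ref{lem:posterior} at depth $h$, $\PP(E)\le C\exp(-e^{cR})$, so the loss factor $1/t=e^{O(R)}$ is absorbed into the doubly exponential tail after adjusting constants. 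A union bound over the finitely many stars $a$, together with $p_a\ge p$, then yields \eqref{eq:comp-post}. With this step inserted, your proof is the paper's.
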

\begin{proof}
Choose a cutset $M$ at depth $h=\lfloor R/2\rfloor$ below $v$, and let
$O$ comprise every edge color strictly below the cutset. The rooted
tree Markov property gives $X_v\perp O\mid M$. The path-lifting
event of Proposition~\ref{prop:lifting}, using the remaining block
depth, is measurable with respect to $(M,O)$, has probability
$1-C\exp(-e^{cR})$, and ensures that the component label has the same
value for every possible coloring above $M$ with $M,O$ fixed.
Furthermore, the part of $W$ outside the rooted subtree has already
been fixed, and all of its remaining coordinates belong to $O$ or are
the component label. Hence on the event $G$,
\begin{equation}\label{eq:fiber}
 \PP(X_v=a\mid M,O,W,G)=\PP(X_v=a\mid M).
\end{equation}
Here the component label is constant on the entire interior fiber.

Write $p_a=\PP(X_v=a)$ and $b_a(M)=\PP(X_v=a\mid M)$. Since $G$
and the cutset-good event in Lemma~\ref{lem:posterior} are functions of
$(M,O)$, conditioning and \eqref{eq:fiber} give for every $t>0$
\[
 |\PP(X_v=a\mid W)-p_a|
 \le t+\PP\bigl(G^c\cup\{|b_a(M)-p_a|>t\}\mid W\bigr).
\]
Markov's inequality shows that the probability that the left side
exceeds $2t$ is at most the probability of the union on the right
divided by $t$. Take $t=Ce^{-cR}$, use the star conclusion of
Lemma~\ref{lem:posterior}, and then take a union bound over the
finitely many possible stars. Their nonzero prior probabilities are
uniformly bounded below in terms of $\Delta,q$. Adjusting $c,C$
proves \eqref{eq:comp-post}.
\end{proof}

\subsection{Entropy factorization}
We give the factorization in a form that allows an update to change
two neighboring star states. Its proof uses the Markov structure of
the measure but keeps the \emph{original} component partition.
Let $p>0$ bound below every positive conditional probability of a star
state given its parent. Such a bound exists for fixed $\Delta,q$.

\begin{proposition}\label{prop:factorization}
Suppose for some fixed $R$ and $\varepsilon>0$, uniformly over all
conditional rooted subtrees,
\begin{equation}\label{eq:EM}
 \Ent\bigl(\E[g\mid X_v]\bigr)
 \le\varepsilon\Ent(g)
 \quad\text{for every }g\ge0\text{ with }g=Q_{v,R}g.
\end{equation}
If $4(R+1)\varepsilon<p$, then
\begin{equation}\label{eq:factorization}
 \Ent_\pi(f)\le
 \frac4{p-4(R+1)\varepsilon}
 \sum_v\E_\pi[\Ent_{B_R(v)}^*(f)]
 \qquad(f\ge0).
\end{equation}
\end{proposition}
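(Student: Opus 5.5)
We prove \eqref{eq:factorization} by induction over the rooted tree, in the style of the block factorization of Sly and Zhang~\cite{SZ17}. The basic tool is a single-vertex decomposition of entropy. For a vertex $v$ and the rooted subtree $T_v$, conditional on the parent star, write
\[
 \Ent_{T_v}(f)=\E\bigl[\Ent_{T_v}(f\mid X_v)\bigr]+\Ent\bigl(\E[f\mid X_v]\bigr).
\]
Given $X_v$, the descendant subtrees are independent by the Markov property. The first term therefore factorizes by tensorization (subadditivity of entropy for product measures) into the entropies of the child subtrees, and the induction hypothesis applies to each. The whole difficulty is the second term: the entropy carried by the root star.

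To handle it, insert the block projection. Write $f=Q_{v,R}f+(f-Q_{v,R}f)$, or better, use the chain rule
\[
 \Ent(f)=\E[\Ent^*_{B_R(v)}(f)]+\Ent(Q_{v,R}f).
\]
Then bound
\[
 \Ent(\E[f\mid X_v])\le 2\,\Ent(\E[Q_{v,R}f\mid X_v])+2\cdot(\text{error}).
\]
The error compares $\E[f\mid X_v]$ with $\E[Q_{v,R}f\mid X_v]$. We control it by the local entropy $\E[\Ent^*_{B_R(v)}(f)]$, using that $X_v$ is measurable inside the block and that conditional probabilities of $X_v$ are at least $p$. Hypothesis \eqref{eq:EM} applied to $g=Q_{v,R}f$ gives
\[
 \Ent(\E[Q_{v,R}f\mid X_v])\le\varepsilon\,\Ent(Q_{v,R}f)\le\varepsilon\,\Ent(f).
\]
Here we use that $Q_{v,R}$ is a conditional expectation, so it contracts entropy. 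The lower bound $p$ enters through a standard comparison: for a measure whose atoms have mass at least $p$, entropy of a projection onto a one-vertex variable is bounded by $p^{-1}$ times the relevant local variance or entropy. This is the source of the constant $4/p$. The factor $4$ absorbs the two applications of $(x+y)^2$-type splitting inequalities.

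Summing over all vertices is the next step. Unrolling the recursion from the root downward, each vertex $v$ contributes a term $\varepsilon\,\Ent(\cdot)$ of a function living on a subtree. The key bookkeeping is that a given coordinate lies in $B_R(v)$ for at most $R+1$ choices of $v$ along its ancestral line. Up to a further factor, the total $\varepsilon$-contributions are therefore at most $4(R+1)\varepsilon\,\Ent_\pi(f)/p$-type quantities. Collecting terms gives
\[
 p\,\Ent_\pi(f)\le 4\sum_v\E[\Ent^*_{B_R(v)}(f)]+4(R+1)\varepsilon\,\Ent_\pi(f).
\]
Rearranging under $4(R+1)\varepsilon<p$ yields \eqref{eq:factorization}.

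The main obstacle is the compatibility of the component-label conditioning with the induction. $Q_{v,R}$ conditions on the original component partition $\mathcal C_{v,R}$ of the block graph. Neighboring-edge updates can straddle a star in $B_R(v)$ and its child outside, so the label depends on the first outside layer. We must check that when we pass to child subtrees, the functions to which \eqref{eq:EM} is applied still satisfy $g=Q_{v,R}g$ for the relevant conditional subtree measure. The approach we would try first is to apply \eqref{eq:EM} only to $Q_{v,R}f$ itself, which is trivially $Q_{v,R}$-invariant. Any mismatch between nested blocks would then be absorbed into the local entropy terms via the chain rule, rather than by projecting child functions. This costs the factor $R+1$ counted above.
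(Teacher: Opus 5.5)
Your root-star estimate is close to the paper's. The proposal has a genuine gap, however, in how the $\varepsilon$-remainders are summed over vertices.

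\textbf{The root-star step.} The splitting $\Ent(\E[f\mid X_v])\le 2\Ent(\E[Q_{v,R}f\mid X_v])+2\cdot(\text{error})$ is not itself an entropy inequality. The paper makes it rigorous by comparing the $X_v$-marginals in $\ell_1$:
\begin{itemize}
\item data processing and Pinsker for $H=\E[\Ent^*_{B_R(v)}(f)]=D(f\pi\Vert (Q_{v,R}f)\pi)$;
\item Pinsker again for the hypothesis applied to $Q_{v,R}f$;
\item then $\Ent\le\chi^2\le p^{-1}\|\cdot\|_1^2$.
\end{itemize}
This gives $\Ent(\E[f\mid X_v])\le\frac4pH+\frac{4\varepsilon}p\Ent(f)$.

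\textbf{The gap in the summation.} The remainder at $v$ is $\varepsilon$ times the entropy of $f$ (or of $Q_{v,R}f$) on the \emph{whole} conditional subtree $T_v$. Moreover, $Q_{v,R}f$ still depends on every star outside $B_R(v)$; if $f$ depends only on such coordinates, then $Q_{v,R}f=f$. So your count ``a coordinate lies in $B_R(v)$ for at most $R+1$ vertices $v$'' does not control these terms. Summing the remainders over $v$ charges each coordinate once per ancestor, i.e.\ with multiplicity equal to its depth, which is unbounded.

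Equivalently, your induction does not close. Combining $\Ent_{T_v}(f)\le\E[\Ent(f\mid X_v)]+\frac4pH+\frac{4\varepsilon}p\Ent_{T_v}(f)$ with the inductive hypothesis on the children multiplies the constant by $(1-4\varepsilon/p)^{-1}$ at every level.

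\textbf{The missing idea} is exactly the projection you chose to avoid.
\begin{itemize}
\item The paper sets $P_v(f)=\E_\pi[\Ent_{X_v}(\E[f\mid X_v,X_{V\setminus T_v}])]$ and uses $\Ent_\pi(f)\le\sum_vP_v(f)$.
\item It applies the root estimate not to $f$ but to $g=\E[f\mid U,S]$. Here $U$ are the stars at depth less than $R$ below $v$, $S$ those at depth exactly $R$, and $Z$ the deeper ones.
\item It checks that $P_v(g)=P_v(f)$, by the tower property.
\item It checks that $\E_\pi[\Ent^*_{B_R(v)}(g)]\le\E_\pi[\Ent^*_{B_R(v)}(f)]$, by conditional Jensen. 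This uses $U\perp Z\mid S$ and the fact that $\mathcal C_{v,R}$ is a function of $(U,S)$; this is where locality of neighboring-edge updates really matters.
\item It checks that $P_w(g)\le P_w(f)$ for $1\le d(v,w)\le R$.
\end{itemize}
Because $g$ lives on $R$ levels, the chain rule gives $\E[\Ent(g\mid X_v)]\le\sum_{1\le d(v,w)\le R}P_w(g)$. This yields the local recursion
\[
P_v(f)\le\frac4{p-4\varepsilon}\E_\pi[\Ent^*_{B_R(v)}(f)]+\frac{4\varepsilon}{p-4\varepsilon}\sum_{1\le d(v,w)\le R}P_w(f).
\]

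Only now does a multiplicity count apply. Each $w$ has at most $R$ ancestors within distance $R$. The extra $1$ in $R+1$ comes from absorbing $\Ent(\E[g\mid X_v])$ through the chain rule, not from block overlaps.

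The absorption must also be done in $\sum_vP_v(f)$, not in $\Ent_\pi(f)$. Tensorization gives only $\Ent_\pi(f)\le\sum_vP_v(f)$, and the $P_w$ cannot be bounded by $\Ent_\pi(f)$. So your closing inequality, with $4(R+1)\varepsilon\Ent_\pi(f)$ on the right, is not what such an argument produces.
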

\begin{proof}
Fix the colors outside one rooted subtree, and normalize $\E f=1$.
Put $A=X_v$, $Q=Q_{v,R}$, and
$H=\Ent(f)-\Ent(Qf)=\E[\Ent(f\mid W)]$, where $W$ records the block
exterior and component. The chain rule for relative entropy also gives
$H=D(f\pi\Vert(Qf)\pi)$. Data processing followed by Pinsker yields
\[
 \|(f\pi)_A-(Qf\pi)_A\|_1\le\sqrt{2H}.
\]
The assumed contraction \eqref{eq:EM} applied to $Qf$ gives
$\Ent(\E[Qf\mid A])\le\varepsilon\Ent(Qf)$, so a second application
of Pinsker shows
\[
 \|(f\pi)_A-\pi_A\|_1
 \le\sqrt{2H}+\sqrt{2\varepsilon\Ent(Qf)}.
\]
Every nonzero atom of $\pi_A$ is at least $p$. Using
$\Ent(\E[f\mid A])\le\chi^2((f\pi)_A\Vert\pi_A)$,
$\chi^2\le p^{-1}\|(f\pi)_A-\pi_A\|_1^2$, and
$\Ent(Qf)\le\Ent(f)$, we obtain
\begin{equation}\label{eq:root-pre}
 \Ent(\E[f\mid A])\le\frac4pH+\frac{4\varepsilon}{p}\Ent(f).
\end{equation}
The entropy chain rule
$\Ent(f)=\Ent(\E[f\mid A])+\E[\Ent(f\mid A)]$,
inserted into \eqref{eq:root-pre}, yields
\begin{equation}\label{eq:root-step}
 \Ent(\E[f\mid A])
 \le\frac4{p-4\varepsilon}\E[\Ent_{B_R(v)}^*(f)]
   +\frac{4\varepsilon}{p-4\varepsilon}\E[\Ent(f\mid A)].
\end{equation}
The estimate holds uniformly over rooted subtrees.

We spell out the conditional expectations needed to
sum \eqref{eq:root-step}. Write $T_v$ for the descendants of $v$
including $v$, and put
\[
 P_v(f)=\E_\pi\left[
 \Ent_{X_v}\bigl(\E[f\mid X_v,X_{V\setminus T_v}]\bigr)
 \right].
\]
Conditional on the parent star, distinct child subtrees are independent.
The entropy chain rule at the root, followed by tensorization over its
child subtrees and the same argument recursively, gives
\begin{equation}\label{eq:tree-chain}
 \Ent_\pi(f)\le\sum_vP_v(f).
\end{equation}

Fix $v$ and condition on the exterior of $T_v$. Let $U$ be the stars
of depths less than $R$ below $v$, $S$ those at depth exactly $R$,
and $Z$ those at greater depth. Set $g=\E[f\mid U,S]$ in this
conditional probability space. The tower property gives
$P_v(g)=P_v(f)$. The Markov property gives
\begin{equation}\label{eq:separator}
 U\ \perp\ Z\mid S.
\end{equation}
By locality, the component label of the $U$-block is a function of
$(U,S)$: an allowed update entirely in $U$ cannot test any color deeper
than its immediately neighboring outside star $S$. In particular,
this is still true when a neighboring-edge update changes two adjacent
stars inside the block; updates that change a fixed outside edge are
not included in the restricted graph. Conditional on $(S,\mathcal C)$,
\eqref{eq:separator} implies that the reference law of $U$ is
independent of $Z$. Thus, writing $\nu_{S,\mathcal C}$ for this
reference law, conditional Jensen says precisely
\[
 \Ent_{\nu_{S,\mathcal C}}
  \bigl(\E[f(U,S,Z)\mid U,S]\bigr)
 \le\E\bigl[\Ent_{\nu_{S,\mathcal C}}(f(U,S,Z))
   \mid S,\mathcal C\bigr].
\]
Averaging over $S,\mathcal C$ gives
\begin{equation}\label{eq:component-jensen}
 \E_\pi[\Ent_{B_R(v)}^*(g)]
 \le\E_\pi[\Ent_{B_R(v)}^*(f)].
\end{equation}
If a branch ends before depth $R$, it contributes no boundary variable;
the same reasoning applies componentwise.

We also need to compare the $P_w$-terms before and after the averaging.
Fix a descendant $w$ at depth at most $R$ below $v$. Let $Y$ be the
full exterior of $T_w$ and $Y_0$ its retained part, which includes the
parent star of $w$; write $Y_1=Y\setminus Y_0$. Conditional on the
parent star, $X_w$ is independent of $Y_1$. For each possible star
color $a$, put
$\phi_a(Y_0,Y_1)=\E[f\mid X_w=a,Y_0,Y_1]$.
The projection defining $g$ integrates omitted variables both in
$T_w$ and in $Y_1$. By the Markov property, integration over the
omitted part of $T_w$ commutes with conditioning on $Y_1$, so
\[
 \E[g\mid X_w=a,Y_0,Y_1]
 =\E[\phi_a(Y_0,Y_1)\mid Y_0].
\]
The conditional reference law $\mu_w(a\mid Y)$ is
$\mu_w(a\mid Y_0)$, because $Y_0$ contains the parent star. Thus
convexity of entropy, with that same reference law on both sides,
gives
\[
 \Ent_{\mu_w(\cdot\mid Y_0)}
    \bigl(\E[\phi_{\cdot}(Y_0,Y_1)\mid Y_0]\bigr)
 \le\E\bigl[
   \Ent_{\mu_w(\cdot\mid Y_0)}
      (\phi_{\cdot}(Y_0,Y_1))\mid Y_0\bigr].
\]
After integrating over $Y_0$, this is exactly
\begin{equation}\label{eq:Pw-jensen}
 P_w(g)\le P_w(f).
\end{equation}
Likewise, applying the entropy chain rule to the retained Markov tree,
conditional on $X_v$, and tensorizing its child subtrees yields
\begin{equation}\label{eq:descendant-chain}
 \E_\pi[\Ent(g\mid X_v)]
 \le\sum_{\substack{w\in T_v\\1\le d(v,w)\le R}}P_w(g).
\end{equation}

Apply \eqref{eq:root-step} to $g$, average over the exterior of $T_v$,
and combine these inequalities. We obtain
\begin{equation}\label{eq:peeling}
 P_v(f)\le\frac4{p-4\varepsilon}
 \E_\pi[\Ent_{B_R(v)}^*(f)]
 +\frac{4\varepsilon}{p-4\varepsilon}
 \sum_{\substack{w\in T_v\\1\le d(v,w)\le R}}P_w(f).
\end{equation}
Each vertex $w$ has at most $R$ ancestors within distance $R$.
Summing \eqref{eq:peeling} over $v$ and using
\eqref{eq:tree-chain}, the last term can be absorbed when
$4(R+1)\varepsilon<p$. This gives \eqref{eq:factorization}.
\end{proof}

To verify \eqref{eq:EM}, we use a finite-output entropy contraction
bound: if a finite-valued $A$ has
minimum positive atom at least $p$ and
\[
 \PP\left(\max_a\left|
 \frac{\PP(A=a\mid W)}{\PP(A=a)}-1\right|>\eta\right)
 \le\delta,
\]
then, for every nonnegative $g=g(W)$,
\begin{equation}\label{eq:KL}
 \Ent(\E[g\mid A])\le
 C_p\left(\eta^2+\frac1{\sqrt{\log(1/\delta)}}\right)\Ent(g).
\end{equation}
\begin{proof}[Proof of the finite-output entropy bound]
We give a proof of \eqref{eq:KL}. Normalize $\E g=1$, write $H=\Ent(g)$, $p_a=\PP(A=a)$, and $Z_a(W)=\PP(A=a\mid W)/p_a$. Each $Z_a$ has mean one and $|Z_a-1|\le p^{-1}$. Let $B$ be the exceptional event, of probability at most $\delta$, and put $s=\E[g\mathbf1_B]$. Since
\[
 \E[g\mid A=a]-1=\E[(g-1)(Z_a-1)],
\]
Pinsker's inequality gives an upper bound $\eta\sqrt{2H}$ for the contribution from $B^c$. On $B$, Cauchy--Schwarz, $|g-1|=|\sqrt g-1|(\sqrt g+1)$, and the Hellinger bound yield
\[
 \E[|g-1|\mathbf1_B]\le C\sqrt{H(s+\delta)}.
\]
Consequently
\[
 |\E[g\mid A=a]-1|^2\le C_pH(\eta^2+s+\delta),
\quad
 \Ent(\E[g\mid A])\le C_pH(\eta^2+s+\delta).
\]
By the variational formula for relative entropy, for every $\lambda>0$,
\[
 s\le\frac{H+\log(1+\delta(e^\lambda-1))}{\lambda}.
\]
Taking $\lambda=\tfrac12\log(1/\delta)$ gives $s\le2(H+\sqrt\delta)/\log(1/\delta)$. When $H\le\sqrt{\log(1/\delta)}$, substitution gives \eqref{eq:KL}. For larger $H$, the bound $\Ent(\E[g\mid A])\le\log(1/p)$ gives the same conclusion. The finitely many constants depend only on the output alphabet's minimum positive atom $p$.
\end{proof}

Lemma~\ref{lem:comp-post} has $\eta=Ce^{-cR}$ and
$\delta=C\exp(-e^{cR})$, and therefore supplies
$\varepsilon\le Ce^{-cR}$. A fixed $R=R(\Delta,q)$ sufficiently large
satisfies $4(R+1)\varepsilon<p$. Proposition~\ref{prop:factorization}
then proves
\begin{equation}\label{eq:block-end}
 \Ent_\pi(f)\le C_{\Delta,q}
 \sum_v\E_\pi[\Ent_{B_R(v)}^*(f)].
\end{equation}

\section{Mixing bounds}
For fixed $\Delta,q,R$, each rooted star block contains a bounded number of edges. Write $E(B_R(v))$ for the original edges represented by stars in $B_R(v)$. Under any fixed exterior coloring, every nonempty component of its local recoloring graph carries an irreducible reversible heat-bath chain, with positive transition probabilities bounded below in terms of $\Delta,q,R$. Hence a finite-state log-Sobolev comparison bounds its conditional entropy by a constant times the sum of pair conditional variances for pairs wholly contained in the block, including singleton pairs. Consequently,
\[
 \E_\pi[\Ent_{B_R(v)}^*(f)]
 \le K_{\Delta,q,R}\!
 \sum_{\substack{\{e,g\}\subseteq E(B_R(v))\\e=g\text{ or }e\cap g\ne\varnothing}}
 \pi[\operatorname{Var}_{\{e,g\}}(\sqrt f)].
\]
The constant is uniform over the exterior and the component. Each edge pair belongs to at most $O(R)$ rooted star blocks, because its endpoints span at most two adjacent star states. Inserting this estimate into \eqref{eq:block-end} proves
\[
 \Ent_\pi(f)\le C_{\Delta,q}\mathcal D_2(\sqrt f)
 \qquad(q\ge\Delta+1).
\]
For single-edge dynamics, the same argument with the two-spare buffer and singleton moves gives $\Ent_\pi(f)\le C_{\Delta,q}\mathcal D_1(\sqrt f)$ when $q\ge\Delta+2$. When $q\ge\Delta+2$, neighboring-edge heat-bath dynamics also dominates the singleton energy since its update set includes $\{e,e\}$. Thus both claims hold on arbitrary bounded-degree trees.

Both chains choose from $\Theta_\Delta(n)$ local update sets. Their normalized log-Sobolev constants are therefore $\Omega_{\Delta,q}(1/n)$; the proposed-color version of single-edge Glauber differs from single-edge heat bath by at most a factor $q$. Finally, $\pi_{\min}^{-1}=|\Omega_q(T)|\le q^{n-1}$. Entropy decay and Pinsker's inequality give, for every fixed $\Delta,q$,
\[
 t_{\mathrm{mix}}(\varepsilon)
 \le K_{\Delta,q}n\left(\log n+\log\frac1\varepsilon\right).
\]

\paragraph{Uniformity in the number of colors.}
The constant in the mixing-time bound can be made independent of $q$ by a direct coupling for large palettes. Regard edges of $T$ as sites of the line graph $G$, and put $m=|E(T)|$ and $D=2\Delta-2$, so that $G$ has maximum degree at most $D$. Couple two arbitrary colorings with disagreement set $S$ by using the same selected edge and proposed color in the single-edge chain. If the selected edge belongs to $S$, the two copies agree after the update whenever the proposal avoids the forbidden colors in both copies. At most $2D$ colors are forbidden. If the selected edge agrees in both copies, a new disagreement occurs only when the proposal is a color used by a disagreeing neighbor in exactly one copy. Summing these events over sites gives
\begin{equation}\label{eq:high-q-single}
 \E\bigl[|S_{t+1}|\mid S_t\bigr]
 \le \left(1-\frac{1-4D/q}{m}\right)|S_t|.
\end{equation}
In particular, if $q\ge8D$, the contraction factor is at most $1-1/(2m)$.

For the neighboring-edge chain, let $\mathcal B$ consist of the singleton sites and adjacent pairs of $G$, and put $N=|\mathcal B|\le (1+D/2)m$. Couple the conditional heat-bath measures on the selected block $B\in\mathcal B$ maximally. If the two boundary conditions differ at $k$ neighbors of $B$, the conditional measures have total variation distance at most $16k/q$, provided $q\ge2D+2$. For a two-site block, each changed boundary color changes at most four membership constraints on the two allowable-color sets. Their sets of feasible ordered color pairs therefore have symmetric difference at most $4kq$, while each set contains at least $(q-D)(q-D-1)\ge q^2/4$ pairs. The singleton case has a smaller bound. Hence a maximal coupling leaves at most $32k/q$ expected disagreements inside $B$. Every site belongs to a singleton block, and a fixed disagreement is adjacent to at most $D(D+1)$ blocks. Averaging over $\mathcal B$ yields
\begin{equation}\label{eq:high-q-pair}
 \E\bigl[|S_{t+1}|\mid S_t\bigr]
 \le \left(1-\frac{1-32D(D+1)/q}{N}\right)|S_t|.
\end{equation}
When $q\ge64(D+1)^2$, this is at most $(1-1/(2N))|S_t|$. Coupling either chain to a stationary copy gives $t_{\mathrm{mix}}(\varepsilon)\le 2N(\log m+\log(1/\varepsilon))$ in this range, with $N=m$ for the single-edge chain. For fixed $\Delta$, the smaller admissible values of $q$ form a finite set, to which the log-Sobolev bounds above apply. Taking the maximum of their constants proves~\eqref{eq:uniform-mix}.

The two thresholds reflect different ways to store a displaced color. Single-edge updates use two spare colors; neighboring-edge updates replace one of these colors by a short exchange through a second layer. In either case, a fixed exterior preparation supports arbitrarily many interior moves.

The constants are not optimized in $\Delta$. Their present dependence comes from a fixed-radius block and a finite-state log-Sobolev comparison. It would be interesting to obtain polynomial dependence on $\Delta$ by estimating the repair paths directly.

\paragraph{Acknowledgment.} The author chose the entropy-based approach and directed the investigation of component entropy methods, identifying relevant ideas and references, including the work of Sly and Zhang. OpenAI's GPT-6 assisted with exploratory derivations, proof checking, literature review, and manuscript editing. The author is responsible for the mathematical arguments and conclusions.

\end{document}